\documentclass[12pt,leqno,fleqn]{amsart}
\usepackage{latexsym}
\usepackage{amssymb}
\usepackage{amsxtra}
\usepackage{amsmath}
\usepackage{amsfonts}
\usepackage{CJK}
\textwidth14cm \hoffset-2cm \textwidth 17truecm

\newcommand{\cleqn}{\setcounter{equation}{0}}
\newcommand{\clth}{\setcounter{theorem}{0}}
\newcommand {\sectionnew}[1]{\section{#1}\cleqn\clth}
\newtheorem{theorem}{Theorem}[section]
\newtheorem{lemma}[theorem]{Lemma}
\newtheorem{proposition}[theorem]{Proposition}


\def\({\left(}
\def\){\right)}
\def\[{\begin{eqnarray}}
\def\]{\end{eqnarray}}
\def\d{\partial}

\def\d{\partial}

\def\La{\Lambda}
\def\la{\lambda}

\def\De{\Delta}

\newcommand{\Z}{\mathbb{Z}}


\makeatother       

\makeatletter      
\@addtoreset{equation}{section}
\makeatother       

\title{Ghost symmetry of the discrete KP hierarchy }
 \author{Chuanzhong Li\dag$^\star$ \ \ Jipeng Cheng\dag\dag \ \ Kelei Tian\ddag \ \ Maohua Li\dag \ \ Jingsong He\dag $^*$}

\dedicatory {  \dag Department of
Mathematics,  NBU, Ningbo, 315211, Zhejiang, P.\ R.\ China\\
  \dag\dag Department of
Mathematics,  CUMT, Xuzhou, 221116, Jiangsu, P.\ R.\ China\\
\ddag School of  Mathematical Sciences,  USTC, Hefei, 230026, Anhui, P.\ R.\ China\\
 $^\star$email:lichuanzhong@nbu.edu.cn\\
$^*$email:hejingsong@nbu.edu.cn
}

\thanks{$^*$ Corresponding author}

\date{}
\begin{document}

\begin{abstract}
 In this paper, with the help of the $S$ function and ghost symmetry for the discrete KP hierarchy which is a semi-discrete version  of the KP hierarchy, the ghost flow on its eigenfunction(adjoint eigenfunction) and the spectral representation of its Baker-Akhiezer function and adjoint Baker-Akhiezer function are derived. From these observations above, some important distinctions between the discrete KP hierarchy and KP hierarchy are shown. Also we give the ghost flow on the tau function and another kind of proof of the ASvM formula of the discrete KP hierarchy.

\end{abstract}


\maketitle
Mathematics Subject Classifications(2000).  37K05, 37K10, 37K20.\\
Keywords:  the discrete KP hierarchy, Baker-Akhiezer function, ghost symmetry, spectral representation, ASvM formula.\\
\allowdisplaybreaks
 \setcounter{section}{0}

\section{Introduction}

\setcounter{section}{1}
The discrete KP(dKP) hierarchy is an
interesting object in the research of  integrable systems \cite{Kupershimidt,Chaos,Iliev}. The discrete KP-hierarchy can be viewed
 as the classical KP-hierarchy \cite{DKJM,dickeybook} with the continuous derivative
$\frac{\d}{\d x}$ replaced formally by the discrete derivative $\Delta$ whose action on function $f(n)$ as
\[\triangle f(n)=f(n+1)-f(n).
\]

The Hamiltonian structures and
 tau function for the discrete KP hierarchy
was introduced in (\cite{Kupershimidt}-\cite{Iliev}). In \cite{LiuS}, the determinant representation of the gauge transformation for the discrete KP
hierarchy was introduced. The
Sato Backlund transformation, additional symmtries and ASvM formula for the discrete KP hierarchy was considered in \cite{LiuS2}.
The fermionic approach to Darboux transformations was considered in \cite{Willox} for the 1-component KP hierarchy and showed that any solution of the
associated (adjoint) linear problems can always be expressed as a superposition of
wave functions.
It is shown how the Darboux and binary Darboux transformations for a nonautonomous
discrete KP equation can be obtained from fermion analysis in \cite{WilloxJMP}.

 More recently, the extended discrete KP hierarchy and the algebraic structures
of the non-isospectral flows of the discrete KP hierarchy were investigated in \cite{YaoY}. The extended flow in \cite{YaoY} which is quite related to ghost symmetry inspired us to use the so-called ghost symmetry to derive some new results including the spectral  representation of the Baker-Akhiezer function and adjoint Baker-Akhiezer function in this paper.
After discretization, many important integrable properties are inherited by the discrete KP hierarchy from the KP hierarchy(\cite{SunXL}, \cite{adlerCMP},\cite{Dickey}).
For example: these include the existence of tau function, hamiltonian structure of the discrete KP hierarchy  and the close relationship between tau function of the discrete KP hierarchy and the one of  the KP hierarchy \cite{Iliev}. It is also interesting to further explore the new facts to show  the difference between  the discrete KP hierarchy and the KP hierarchy from the point of view of  symmetries.

As we all know, the symmetry is always an important research object in integrable systems. Many important characters of integrable systems have a close relation with symmetries, e.g. conserve laws, hamiltonian structures and so on. As one kind of additional symmetry, the ghost symmetry was discovered by W. Oevel \cite{WOevel}. After that, it attracts a lot of research (\cite{WOevelRMP} -\cite{Jipeng}). H. Aratyn  used the method
squared eigenfunction potentials(SEP or $S$ function later) to construct ghost symmetry of KP hierarchy and connect this kind of symmetry with constrained KP hierarchy(\cite{Aratyn97pla}-\cite{Aratyn}). W. Oevel and S. Carillo used $S$ function to represent $2+1$-dimensional hierarchies of the KP equation,
the modified KP equation and the Dym equation\cite{WOevel2,WOevel3}. J. P. Cheng etc. gave a good construction of the ghost symmetry of the BKP hierarchy\cite{Jipeng}.
In this paper, in order to study the action of ghost flows on the wave functions and tau function, it is necessary to introduce the $S$ function and spectral representation of the discrete KP hierarchy.  The spectral representation and  ghost symmetry   will tell us an important difference of the discrete KP hierarchy from the KP hierarchy.

The paper is organized as follows.  In Section 2, after recalling some basic facts of the discrete KP hierarchy (\cite{Kupershimidt}-\cite{Iliev}), the squared eigenfunction potentials ($S$ function) is introduced.   In Section 3,   the  ghost symmetry of the discrete KP hierarchy will be given with the help of the $S$ function. In Section 4, the spectral representation of eigenfunctions for the discrete KP hierarchy help us in deriving the ghost flow of eigenfunctions from the flow on Baker-Akhiezer wave functions, and meanwhile we give  some  nice properties of these functions.  Using these properties we give the ghost flow on the tau function and  a different proof of ASvM formula from\cite{LiuS2} in Section 5.
Section 6 is devoted to conclusions and
discussions.

\section{The discrete KP hierarchy and $S$ function }
\setcounter{section}{2}
To save the space in this section, we would like to follow reference \cite{Iliev,LiuS2} to recall some basic known  facts about the discrete KP hierarchy.
Firstly for an arbitrary difference function $ g(n)=g(n,t_1,t_2,\cdots,t_j,\cdots);  n\in\mathbb{Z}, t_i\in\mathbb{R},$
the shift operator  acting on this discrete function $g(n)$ is defined by
\[\Lambda g(n)=g(n+1).
\]
A difference operator $\triangle$ which acts on the function $g(n)$ is defined as following
\[\triangle g(n)=(\Lambda -1)g(n)=g(n+1)-g(n).
\]

The product rule of the $j$-th power  of operator $\triangle$ can be defined by operators' multiplication $``\circ"$,

\begin{equation}\triangle^k\circ
g=\sum^{\infty}_{i=0}\binom{k}{i}(\triangle^i
g)(n+k-i)\triangle^{k-i},
\end{equation}
where $\binom{k}{i}$ is the ordinary combinatorics number choosing $i$ from $k$.
For a formal pseudo
difference operators $G=\sum_{j=-\infty}^m g_j(n)\triangle^j, g_j(n)\in
R, n\in\mathbb{Z},$
and denote $G_+:=\sum_{j=0}^m g_j(n)\circ\triangle^j$ as the positive
projection of $G$ and by $G_-:=\sum_{j=-\infty}^{-1}
g_j(n)\circ \triangle^j$, the negative projection of $G$. Also the action on a function $g(n)$ by the
adjoint difference operator $\triangle^*$  is defined by,
\[\triangle^* g(n)=(\Lambda^{-1}-1)g(n)=g(n-1)-g(n),
\]
where $\Lambda^{-1} g(n)=g(n-1)$, and the corresponding product rule
operation is
\[\triangle^{*k}\circ
g=\sum^{\infty}_{i=0}\binom{k}{i}(\triangle^{*i}g)(n+i-k)\triangle^{*k-i}.
\]
Also  the formal adjoint $R^*$ to $R$ is defined
 as $R^*=\sum_{j=-\infty}^m
\triangle^{*j}\circ g_j(n)$. The adjoint $*$ operation satisfies a anti-involution rule as $(M\circ
N)^*=N^*\circ M^*$ for two arbitrary operators $M,N$ and $f(n)^*=f(n)$ for an arbitrary
function $f(n)$.

         The discrete KP hierarchy \cite{Iliev,LiuS2} is a family of evolution equations depending on
infinitely many variables $t=(t_1,t_2,\cdots)$
\begin{equation}
\frac{\partial L}{\partial t_i}=[B_i, L],\ \ \ B_i:=(L^i)_+,
\end{equation}

where $L$ is a general first-order pseudo difference operator
\begin{equation} \label{laxoperatordkp}
L(n)=\triangle + \sum_{j=0}^{\infty} u_j(n)\triangle^{-j}.
\end{equation}
Similar to the KP hierarchy, $L$  can also be  dressed by operator $W$\cite{LiuS2}
\[W(n;t)=1+\sum^\infty_{j=1}w_j(n;t)\triangle^{-j},
\]
by
\begin{equation}
L=W\circ\triangle\circ W^{-1}.\label{88}
\end{equation}
         There are the Baker-Akhiezer wave function $\Phi_{BA}(n;t,z)$ and adjoint Baker-Akhiezer wave
function $\Psi_{BA}(n-1;t,z)$ \cite{Iliev,LiuS2} constructed by,
\[\label{PhiBA}
\Phi_{BA}(n;t,z)&=&W(n;t)(1+z)^n e^{\sum^\infty_{i=1}t_i z^i},
\]
and
\[\label{PhiBA2}
\Psi_{BA}(n;t,z)&=&(W^{-1}(n-1;t))^*(1+z)^{-n} e^{\sum^\infty_{i=1}-t_i
z^i},
\]
which satisfy
\[
L(n)\Phi_{BA}(n;t,z)&=&z\Phi_{BA}(n;t,z),\ \ \ L^*(n-1)\Psi_{BA}(n;t,z)=z\Psi_{BA}(n;t,z),
\]
and
\[
\d_{t_j}\Phi_{BA}(n;t,z)&=&B_j(n)\Phi_{BA}(n;t,z),\ \ \ \d_{t_j}\Psi_{BA}(n;t,z)=-B_j^*(n-1)\Psi_{BA}(n;t,z).
\]
Then the Baker-Akhiezer wave function $\Phi_{BA}(n;t,z)$ and adjoint Baker-Akhiezer wave
function $\Psi_{BA}(n-1;t,z)$ will have forms as
\[
\Phi_{BA}(n;t,z)  &=&(1+\frac{w_1(n;t)}{z}+\frac{w_2(n;t)}{z^2}+\cdots)(1+z)^n e^{\sum^\infty_{i=1}t_i
        z^i}\label{41},
\]
and
\[
\Psi_{BA}(n;t,z)
&=&(1+\frac{w_1^*(n;t)}{z}+\frac{w_2^*(n;t)}{z^2}+\cdots)(1+z)^{-n}
e^{\sum^\infty_{i=1}-t_i z^i}.
\]
 Also there exists a tau function $\tau_\triangle=\tau(n;t)$  for the discrete KP
hierarchy \cite{Iliev,LiuS2}, which satisfies
\begin{equation}\label{taudefinition1}
1+\frac{w_1(n;t)}{z}+\frac{w_2(n;t)}{z^2}+\cdots=\frac{\tau(n;t-[z^{-1}])}{\tau(n;t)},
\end{equation}
and
\begin{equation}\label{taudefinition2}
1+\frac{w_1^*(n;t)}{z}+\frac{w_2^*(n;t)}{z^2}+\cdots=\frac{\tau(n;t+[z^{-1}])}{\tau(n;t)},
\end{equation}
where $[z^{-1}]=(\frac{1}{z},\frac{1}{2z^2},\frac{1}{3z^3},\cdots)$.

Now we prove some useful properties for the operators which are used later. \\
{\sl {\bf  Lemma 2.1}}  For $f\in F$ and $\triangle$, $\Lambda$ as
above, the following identities hold true.
\begin{align}
&(1)\quad \triangle\circ\Lambda=\Lambda\circ\triangle,  \\
&(2)\quad \triangle^*=-\triangle\circ\Lambda^{-1},  \\
&(3)\quad
(\triangle^{-1})^*=(\triangle^*)^{-1}=-\Lambda\circ\triangle^{-1}, \\
&(4)\quad \triangle^{-1}\circ
f\circ\triangle^{-1}=(\triangle^{-1}f)\circ\triangle^{-1}-\triangle^{-1}\circ
\Lambda(\triangle^{-1} f).\label{85}
\end{align}
\begin{proof}
The proof is standard and direct. We omit it here.
\end{proof}

To define S function(called Squared Eigenfunction Potential for KP hierarchy), we need the following proposition firstly similar as \cite{WOevel} which is about the constrained KP hierarchy. For an operator $A=\sum_{j\in \Z}A_j\Delta^j$, we define its residue $res\,  A=A_{-1}.$
\begin{proposition}\label{formula}
The following identities hold:
\[
res((\Delta A)(j))&=&res(\Delta \circ A(j)-A(j+1)\circ \Delta),\\
\label{leq0}
P_{< 0}(\De^{-1}A)
&=& \De^{-1} P_{< 0}(A)+ \De^{-1}P_{ 0}(A^*),\\
\label{resp0}
res(\Delta^{-1} A)&=&-(\La^{-1}P_0(A^*)),
\]
where $(\Delta A)$ denotes the action of $\Delta $ on operator $A$, $(\La^{-1}P_0(A^*))$ means a backward shift of function $P_0(A^*)$(zero order term of operator $A^*$ over $\De$) on discrete parameter.
\end{proposition}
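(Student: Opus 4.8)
The plan is to prove the three identities by reducing everything to the defining formulas for the ``$\circ$'' product in $F(\triangle)$, namely \eqref{81} and its adjoint version, and then comparing coefficients of the appropriate power of $\triangle$ (or $\triangle^*$). Throughout, for an operator $A = \sum_j a_j(n)\triangle^j$, I will write $\operatorname{res}(A) = a_{-1}(n)$ and $P_0(A) = a_0(n)$, and use that conjugation by $\Lambda$ shifts the discrete variable.

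First I would handle the relation $\operatorname{res}((\Delta A)(j)) = \operatorname{res}(\Delta\circ A(j) - A(j+1)\circ\Delta)$. Here $\Delta A$ means the commutator-type action of $\Delta$ on the operator $A$; since $\Delta = \Lambda - I$ acts on an operator by conjugation/shift, one has $(\Delta A) = \Lambda\circ A\circ\Lambda^{-1} - A$ applied coefficientwise, but the cleaner route is to write $\Delta\circ A - A'\circ\Delta$ where $A' = \Lambda\circ A\circ\Lambda^{-1}$ is the shift of $A$, and observe this is exactly $(\Delta A)$ by the Leibniz-type rule for $\triangle$ acting on a product of functions/operators. Taking $\operatorname{res}$ of both sides and using part (1) of Lemma 2.1 ($\triangle\circ\Lambda = \Lambda\circ\triangle$) to identify $A(j+1)$ with the shifted operator gives the claim; this is the warm-up and should be essentially bookkeeping.

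Next, for $P_{<0}(\Delta^{-1}A) = \Delta^{-1}P_{<0}(A) + \Delta^{-1}P_{0}(A^{*})$, I would argue as follows. Write $A = A_{\geq 0} + A_{<0}$. For the purely negative part $A_{<0}$ the product $\Delta^{-1}\circ A_{<0}$ is already in $F(\triangle)_{<0}$ and, modulo terms that land in order $\leq -2$ versus the splitting, one gets $\Delta^{-1}P_{<0}(A)$ up to the piece that the $\triangle^{-1}$ ``eats'': the key is identity \eqref{85} of Lemma 2.1, which tells us how $\Delta^{-1}\circ f\circ\Delta^{-1}$ decomposes, and iterating it lets one push $\Delta^{-1}$ past the tail of $A$. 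The nonnegative part $A_{\geq 0}$ is where the adjoint term appears: $\Delta^{-1}\circ(a_k(n)\triangle^k)$ for $k\geq 0$ has a nontrivial projection onto $F(\triangle)_{<0}$ only through the $\triangle^{-1}$-coefficient, and computing that coefficient via the expansion \eqref{81} (which produces binomial coefficients and backward shifts) reproduces precisely $\Delta^{-1}$ applied to the zero-order coefficient of $A^{*}$, because passing to the adjoint is exactly what converts ``coefficients of $A$ to the right of $\triangle$-powers'' into ``coefficients read off after moving everything to the left.'' I would make this precise by testing both sides against a wave function, or equivalently by a generating-function computation with the symbol of $\Delta^{-1}\circ a_k(n)\triangle^k$.

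Finally, $\operatorname{res}(\Delta^{-1}A) = -(\Lambda^{-1}P_0(A^{*}))$ is the $\triangle^{-1}$-coefficient statement contained in the previous identity: taking $\operatorname{res} = P_{-1}$ of $P_{<0}(\Delta^{-1}A)$ picks out the $\triangle^{-1}$-term, $\operatorname{res}(\Delta^{-1}P_{<0}(A)) = 0$ for order reasons (the leading term of $\Delta^{-1}\circ(b_{-1}\triangle^{-1})$ already sits in order $-2$), and $\operatorname{res}(\Delta^{-1}P_0(A^{*}))$ evaluates by \eqref{81} with $j=-1,\ i=1$ to a single backward shift of $P_0(A^{*})$, with the sign coming from $\binom{-1}{1} = -1$; part (3) of Lemma 2.1 (the formula $(\triangle^{-1})^{*} = -\Lambda\circ\triangle^{-1}$) is the structural reason the shift is $\Lambda^{-1}$ rather than the identity. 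I expect the main obstacle to be the second identity: one must carefully track how the noncommutative ``$\circ$'' multiplication interacts with the projection $P_{<0}$, keeping the backward/forward shifts straight — in particular making sure the adjoint bookkeeping (which shift index appears in $\triangle^{*j}\circ f$ versus $\triangle^{j}\circ f$) comes out consistently. The cleanest way to neutralize that obstacle is to do the whole computation at the level of symbols/actions on the Baker--Akhiezer function, where the adjoint operation becomes the substitution $z \mapsto$ (formal inverse) together with a shift, so that the identity reduces to a transparent manipulation of formal series; I would present that symbol computation and relegate the direct coefficient check to a remark.
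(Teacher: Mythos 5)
Your overall strategy---reduce everything to coefficient comparison in $F(\triangle)$ via \eqref{81}---is in the same spirit as the paper's, and your treatment of the first identity is fine (indeed $\De\circ A(j)-A(j+1)\circ\De=(\De A)(j)$ holds as an operator identity, so taking $res$ is immediate). For \eqref{leq0}, though, the paper's one key move is different from yours and much cheaper: it rewrites $A=\sum_{i}\De^{i}\tilde A_{i}$ with the coefficients to the \emph{right} of the powers of $\De$. Then $\De^{-1}A=\sum_i\De^{i-1}\tilde A_i$ splits by inspection: the $i\ge 1$ terms are purely non-negative order, the $i<0$ terms reassemble to $\De^{-1}P_{<0}(A)$, and the single $i=0$ term is $\De^{-1}\tilde A_0=\De^{-1}P_0(A^{*})$, since passing to the adjoint is exactly what exchanges right- for left-coefficients. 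Your left-coefficient route can be pushed through, but then the entire content is the claim that the strictly negative part of $\De^{-1}\circ a_k\De^k$ ($k\ge0$) equals $\De^{-1}\circ g_k$ with $\sum_k g_k=P_0(A^{*})$, which is precisely the step you defer to ``a generating-function computation.'' Note also that the $A_{<0}$ part needs no correction at all: $\De^{-1}\circ A_{<0}$ already has order $\le -2$, so it equals its own $P_{<0}$-projection on the nose, and identity \eqref{85} of Lemma 2.1 is not where the work is. The detour through the Baker--Akhiezer function is unnecessary for what is a purely algebraic identity in $F(\De)$, and is not what the paper does.

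The genuine error is in your derivation of \eqref{resp0}. In \eqref{81} with $j=-1$ the term indexed by $i$ multiplies $\De^{-1-i}$, so the residue of $\De^{-1}\circ f$ comes from the $i=0$ term, $\binom{-1}{0}f(n-1)=+(\La^{-1}f)(n)$; the $i=1$ term you invoke sits in front of $\De^{-2}$ and contributes nothing to the residue. (Equivalently: writing $\De\circ B=f$ with $B=b_{-1}\De^{-1}+\cdots$ and comparing $\De^{0}$-coefficients forces $b_{-1}(n+1)=f(n)$.) Hence taking $res$ of \eqref{leq0} yields $res(\De^{-1}A)=+\La^{-1}P_0(A^{*})$, and the minus sign cannot be obtained the way you describe: your appeal to $\binom{-1}{1}=-1$ manufactures a sign that the expansion does not produce at order $\De^{-1}$. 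Indeed, testing the displayed \eqref{resp0} on $A=f$ a plain function (so $P_0(A^{*})=f$) gives $\La^{-1}f=-\La^{-1}f$, so the printed sign is itself inconsistent with \eqref{81}; this should be flagged and traced through the later uses of the identity rather than reverse-engineered from a misindexed binomial coefficient.
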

\begin{proof}
Firstly the operator $A(j)$ is supposed to have the following form
\[
A(j)=A_n(j)\Delta^n+A_{n-1}(j)\Delta^{n-1}+\dots+A_{-1}(j)\Delta^{-1}+\dots,\]
where parameter $j$ denotes the discrete parameter.
Then it is easy to do the following calculation
\[
res((\Delta A)(j))=A_{-1}(j+1)-A_{-1}(j)=res(\Delta \circ A(j)-A(j+1)\circ \Delta).\]

If we rewrite operator $A$ into $\sum_{i\in \Z}\De^i\tilde A_i$, then \eqref{leq0} can be got as following

\[P_{< 0}(\De^{-1}A)=P_{< 0}(\De^{-1}\sum_{i\in \Z}\De^i\tilde A_i)
&=& \De^{-1} P_{< 0}(A)+ \De^{-1}P_{ 0}(A^*).
\]
Taking the residue of both sides of \eqref{leq0} will lead to \eqref{resp0}.
\end{proof}

Using above proposition, the following proposition can be easily got similarly as \cite{WOevel}.
\begin{proposition}
If $\alpha$ and $\beta$ are two local difference operators, then
\[res(\De^{-1}\alpha \beta \De^{-1} )&=&res(\De^{-1}P_0(\alpha^*) \beta \De^{-1} )+res(\De^{-1}\alpha P_0(\beta) \De^{-1} ).
\]
\end{proposition}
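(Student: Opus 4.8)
The plan is to reduce the claimed identity
\[
\mathrm{res}(\Delta^{-1}\alpha\beta\Delta^{-1}) = \mathrm{res}(\Delta^{-1}P_0(\alpha^*)\beta\Delta^{-1}) + \mathrm{res}(\Delta^{-1}\alpha P_0(\beta)\Delta^{-1})
\]
to a double application of the residue formula \eqref{resp0} from Proposition~\ref{formula}, which says $\mathrm{res}(\Delta^{-1}A) = -(\Lambda^{-1}P_0(A^*))$ for any pseudo-difference operator $A$. First I would apply this with $A = \alpha\beta\Delta^{-1}$, obtaining $\mathrm{res}(\Delta^{-1}\alpha\beta\Delta^{-1}) = -(\Lambda^{-1}P_0((\alpha\beta\Delta^{-1})^*))$. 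Using the $*$-rules recalled in Section~2, namely $(F\circ G)^* = G^*\circ F^*$ and $(\Delta^{-1})^* = -\Lambda\circ\Delta^{-1}$ (Lemma~2.1(3)), I would rewrite $(\alpha\beta\Delta^{-1})^* = -\Lambda\circ\Delta^{-1}\circ\beta^*\circ\alpha^*$, so the whole thing becomes an expression in $P_0$ of an operator of the form $\Delta^{-1}\beta^*\alpha^*$ up to a shift. I would then do the analogous computation on each of the two terms on the right-hand side.

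**Next,** the key computational step is to expand $P_0(\Delta^{-1}\gamma)$ for a local difference operator $\gamma$ — equivalently $\mathrm{res}(\Delta^{-1}\gamma\cdot\text{something})$ — using the apply-and-shift identity \eqref{85} of Lemma~2.1, $\Delta^{-1}\circ f\circ\Delta^{-1} = (\Delta^{-1}f)\circ\Delta^{-1} - \Delta^{-1}\circ\Lambda(\Delta^{-1}f)$, together with the Leibniz-type expansion \eqref{leq0}, $P_{<0}(\Delta^{-1}A) = \Delta^{-1}P_{<0}(A) + \Delta^{-1}P_0(A^*)$. The point is that $P_0$ of $\Delta^{-1}\beta^*\alpha^*$ only sees the ``diagonal'' interaction between the zero-order parts $P_0(\beta^*)$ and $P_0(\alpha^*)$ — but note $P_0(\alpha^*) = P_0(\alpha)$ and $P_0(\beta^*)=P_0(\beta)$ since the zeroth-order coefficient (a function) is fixed by $*$. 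Tracking how $\Delta^{-1}$ distributes over the product $\beta^*\alpha^*$ and keeping only the terms surviving in $P_0$ should split the left side into exactly two contributions, one in which $\beta^*$ has been replaced by its zero-order term and one in which $\alpha^*$ has been, matching the two terms on the right-hand side after undoing the $*$ and the shift.

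**The main obstacle** I expect is bookkeeping with the shifts $\Lambda^{\pm 1}$: every time $\Delta^{-1}$ is pushed past a function via \eqref{85} a $\Lambda$-shift is generated, and the adjoint of $\Delta^{-1}$ carries its own $\Lambda$, so one has to verify that the shifts on the two sides of the claimed identity genuinely coincide rather than differing by some $\Lambda$. The cleanest way to handle this, which I would adopt, is to avoid the explicit shift altogether by working with $\mathrm{res}$ directly: use the cyclic-type behaviour of residues of products and the observation that for a local operator $\alpha$ one has $P_0(\Delta^{-1}\alpha\Delta^{-1}\cdot(\cdots)) $ controlled by $P_0(\alpha)$ only, so that in the expression $\mathrm{res}(\Delta^{-1}\alpha\beta\Delta^{-1})$ the middle product $\alpha\beta$ — once sandwiched between $\Delta^{-1}$'s — contributes via $\alpha$ and $\beta$ separately through their order-zero parts. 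Essentially this is the discrete analogue of Oevel's computation in \cite{WOevel} for the KP case, and the only real work is checking that the non-commutativity of $\Delta$ with functions (the source of the $\Lambda$-shifts) does not spoil the additive splitting; Proposition~\ref{formula} was set up precisely to absorb those shifts, so the argument should go through by a careful but routine expansion.
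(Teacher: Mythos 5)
There is a genuine gap here, and also a concrete error. The error is the claim that $P_0(\alpha^*)=P_0(\alpha)$: the adjoint does not commute with the projection $P_0$, even though a bare function is fixed by $*$. For instance, for $\alpha=a\,\De$ one has $\alpha^*=\De^*\circ a=\La^{-1}(a)\,\De^*+(\De^*a)$, and since $\De^*=-\La^{-1}\circ\De$ has no order-zero part, $P_0(\alpha^*)=(\De^*a)=a(n-1)-a(n)$, whereas $P_0(\alpha)=0$. This matters because the identity you are proving is \emph{not} symmetric under dropping the stars: the first term on the right carries $P_0(\alpha^*)$ while the second carries $P_0(\beta)$, and a derivation that identifies $P_0(\gamma^*)$ with $P_0(\gamma)$ cannot produce this asymmetry. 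The second, larger problem is that the decisive step of your plan --- that after distributing $\De^{-1}$ over $\beta^*\alpha^*$ ``only the diagonal interaction survives in $P_0$'' and the expression ``should split into exactly two contributions'' --- is precisely the content of the proposition; you assert it rather than prove it, and deferring it as a ``careful but routine expansion'' leaves the proof with no content.

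For comparison, the paper's argument is a four-line computation that never takes the adjoint of the whole product. Since $\beta$ is local, write $\beta=P_{\geq 1}(\beta)+P_0(\beta)$. In the term $res(\De^{-1}\alpha\, P_{\geq 1}(\beta)\, \De^{-1})$ the right factor $P_{\geq 1}(\beta)\De^{-1}$ has order $\geq 0$, so the residue depends only on the strictly negative part of $\De^{-1}\alpha$; and because $\alpha$ is local, $P_{<0}(\alpha)=0$, so identity \eqref{leq0} collapses to $P_{<0}(\De^{-1}\alpha)=\De^{-1}P_0(\alpha^*)$ --- this single application is the only place a star appears, and it lands on $\alpha$ alone. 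Finally $P_{\geq 1}(\beta)$ may be replaced by $\beta$ in that term at no cost because $res(\De^{-1}\circ g\circ\De^{-1})=0$ for any function $g$, which follows from \eqref{85}. If you wish to salvage your route through \eqref{resp0} and adjoints, you would still have to invoke \eqref{leq0} with the correctly starred projection at the analogous point, so the direct splitting of $\beta$ is the cleaner path.
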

\begin{proof}
Using the Proposition \ref{formula}, the following calculation will lead to the proposition.
\begin{eqnarray*}
res(\De^{-1}\alpha \beta \De^{-1} )&=&res(\De^{-1}\alpha P_{\geq 1}(\beta) \De^{-1} )+res(\De^{-1}\alpha P_0(\beta) \De^{-1} )\\
&=&res(P_{< 0}(\De^{-1}\alpha) P_{\geq 1}(\beta) \De^{-1} )+res(\De^{-1}\alpha P_0(\beta) \De^{-1} )\\
&=&res(\De^{-1}P_{ 0}(\alpha^*) P_{\geq 1}(\beta) \De^{-1} )+res(\De^{-1}\alpha P_0(\beta) \De^{-1} )\\
&=&res(\De^{-1}P_{ 0}(\alpha^*) \beta \De^{-1} )+res(\De^{-1}\alpha P_0(\beta) \De^{-1} ).
\end{eqnarray*}
\end{proof}
This proposition will be used to prove the existence of the $S$ function. In the following, we sometimes denote $\phi_{t_n},\d_{t_n}\phi$ short for $\frac{\d \phi}{\d  t_n}$.

Till now, it is time to derive the existence of the $S$ function which is contained in the following proposition.
Before that, we need the following definition.
The functions $\phi,\psi$ which satisfy $\phi_{t_n}(j)=P_0(B_n(j)\phi(j))$ and $\psi_{t_n}(j)=-P_0(B^*_n(j)\psi(j))$ will be called the {\bf eigenfunction and adjoint eigenfunction} of the discrete KP hierarchy respectively.
 By the S function, the following proposition  can be got.
\begin{proposition}
For the eigenfunction $\phi$ and adjoint eigenfunction $\psi$ of the discrete KP hierarchy,  there exists a function $S(\psi,\phi)$, s.t.
\[\label{SDelta}S(\psi,\phi)_{\Delta }&=&\psi\phi,\ \ \\ \label{Stn}
S(\psi,\phi)_{t_n}&=&res(\De^{-1}\circ\psi\circ B_n\circ\phi \circ\De^{-1} ),
\]
where $S(\psi,\phi)_{\Delta }$ means the difference of function  $S(\psi,\phi)$ by the operator $\Delta$.
\end{proposition}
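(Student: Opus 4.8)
Denote by $R_n:=res(\De^{-1}\circ\psi\circ B_n\circ\phi\circ\De^{-1})$ the function appearing on the right of \eqref{Stn}. Since the kernel of $\De$ consists exactly of the functions that do not depend on the discrete variable $n$, a solution $S_0$ of the scalar difference equation $S_{0,\De}=\psi\phi$ always exists (one may simply telescope in $n$), and it is unique up to adding an arbitrary function $h=h(t)$ of the times alone. The plan is then: \emph{(i)} fix such an $S_0$; \emph{(ii)} prove the compatibility $\De(R_n)=(\psi\phi)_{t_n}$ for every $n$, which forces $\De(S_{0,t_n}-R_n)=0$, hence $S_{0,t_n}-R_n=c_n(t)$ with each $c_n$ a function of the times only; \emph{(iii)} prove the cross-relations $(R_n)_{t_m}=(R_m)_{t_n}$, which by \emph{(ii)} give $\partial_{t_m}c_n=\partial_{t_n}c_m$, so that the one-form $\sum_n c_n\,dt_n$ is closed and hence exact; choosing $h$ with $h_{t_n}=-c_n$ and setting $S:=S_0+h$ then produces a function satisfying both \eqref{SDelta} and \eqref{Stn}.

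Step \emph{(ii)} carries the real content, and it is here that the discrete setting differs from the continuous one. I would apply the first identity of Proposition~\ref{formula} to $A:=\De^{-1}\circ\psi\circ B_n\circ\phi\circ\De^{-1}$, which reduces $\De(R_n)=\De(res\,A)$ to the residue of $\De\circ A-A^{(+1)}\circ\De$, where $A^{(+1)}:=\La\circ A\circ\La^{-1}$ is the forward $n$-shift of all coefficients of $A$. In the first term, $\De\circ A=\psi\circ B_n\circ\phi\circ\De^{-1}$, whose residue is $P_0(\psi\circ B_n\circ\phi)=\psi\cdot P_0(B_n\circ\phi)=\psi\,\phi_{t_n}$ by the definition of an eigenfunction. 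In the second term, $A^{(+1)}\circ\De=\De^{-1}\circ(\psi\circ B_n\circ\phi)^{(+1)}$, whose residue is evaluated by \eqref{resp0} together with $(F\circ G)^*=G^*\circ F^*$; it returns, after unwinding the backward shift $\La^{-1}$ that \eqref{resp0} introduces, the zero-order coefficient of $\phi^{(+1)}\circ(B_n^*)^{(+1)}\circ\psi^{(+1)}$, that is $\phi\cdot P_0(B_n^*\circ\psi)=-\phi\,\psi_{t_n}$ by the definition of an adjoint eigenfunction. Taking the minus sign of the identity into account, the two contributions add up to $\De(R_n)=\psi\,\phi_{t_n}+\phi\,\psi_{t_n}=(\psi\phi)_{t_n}$. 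I expect the main obstacle to be precisely this juggling of the shift operators $\La^{\pm1}$: in the continuous KP theory the corresponding step is a plain Leibniz rule, whereas here \eqref{resp0} inevitably drags in $\La^{-1}$, and it is only the projections $P_0$ built into the very definitions of eigenfunction and adjoint eigenfunction that make the shifts cancel and the sign come out right. The auxiliary identity $res(\De^{-1}\alpha\beta\De^{-1})=res(\De^{-1}P_0(\alpha^*)\beta\De^{-1})+res(\De^{-1}\alpha P_0(\beta)\De^{-1})$ established just above is exactly what reconciles ``operator composition'' with ``action of an operator on a function'', so I would keep it in play throughout.

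Step \emph{(iii)} is more bookkeeping than insight. Differentiating $R_n$ in $t_m$ under the residue, I would substitute $\partial_{t_m}L=[B_m,L]$ (and hence the Zakharov--Shabat zero-curvature relations among the $B_k$), $\phi_{t_m}=P_0(B_m\circ\phi)$ and $\psi_{t_m}=-P_0(B^*_m\circ\psi)$, and then collapse the resulting expression using the same two tools, Proposition~\ref{formula} and the $res(\De^{-1}\alpha\beta\De^{-1})$ identity. After simplification $(R_n)_{t_m}$ becomes manifestly symmetric under $m\leftrightarrow n$ up to residues of terms built from the commutators $[B_m,B_n]$, which vanish; this gives $(R_n)_{t_m}=(R_m)_{t_n}$ and completes the construction of $S(\psi,\phi)$ satisfying \eqref{SDelta} and \eqref{Stn}.
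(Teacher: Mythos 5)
Your proof is correct and follows essentially the same route as the paper's: both reduce the existence of $S(\psi,\phi)$ to the two compatibility conditions $\Delta(R_n)=(\psi\phi)_{t_n}$ and $(R_n)_{t_m}=(R_m)_{t_n}$, verified by exactly the same means (the residue identities $res(\Delta A)=res(\Delta\circ A-A(j+1)\circ\Delta)$ and $res(\Delta^{-1}A)=\pm\Lambda^{-1}P_0(A^*)$, the eigenfunction equations, and the zero-curvature relations for the $B_k$). The only difference is that you make explicit the integration step --- telescoping in $n$ and exactness of the closed one-form $\sum_n c_n\,dt_n$ --- which the paper leaves implicit after checking commutativity.
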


\begin{proof}
In the following proof, we temporarily omit symbol $``\circ"$ in the operator multiplication.
Eq.\eqref{SDelta} and eq.\eqref{Stn} can be rewritten as
\[\label{SDelta1}S(\psi(j),\phi(j))_{\Delta }&=&\psi(j) \phi(j),\\ \label{Stn1}
S(\psi(j),\phi(j))_{t_n}&=&res(\De^{-1}\psi(j) B_n(j)\phi(j) \De^{-1} ).
\]
The commutativity of eq.\eqref{SDelta} and eq.\eqref{Stn} can be proved as following
\begin{eqnarray*}
 S(\psi,\phi)_{t_n\De}&=&\De res(\De^{-1}\psi(j) B_n(j)\phi(j) \De^{-1} )\\
 &=&res(\psi B_n(j)\phi(j) \De^{-1}-\De^{-1}\psi(j+1) B_n(j+1)\phi(j+1))\\
&=&\psi P_0( B_n(j)\phi(j))+\phi^*(j) P_0( B^*_n(j)\psi^*(j))\\
&=&\psi(j) \phi(j)_{t_n}+\phi(j) \psi(j)_{t_n}\\
&=& S(\psi,\phi)_{\De t_n},
\end{eqnarray*}

\begin{eqnarray*}\notag
 &&S(\psi,\phi)_{t_nt_m}-S(\psi,\phi)_{t_mt_n}\\
 &=&[  res(\De^{-1}\psi_{t_m}(j) B_n(j)\phi(j) \De^{-1} )+ res(\De^{-1}\psi(j) (B_n)_{t_m}(j)\phi(j) \De^{-1} )\\
 &&+ res(\De^{-1}\psi(j) B_n(j)\phi_{t_m}(j) \De^{-1} )-  res(\De^{-1}\psi_{t_n}(j) B_m(j)\phi(j) \De^{-1} )\\
 &&- res(\De^{-1}\psi(j) (B_m)_{t_n}(j)\phi(j) \De^{-1} )- res(\De^{-1}\psi(j) B_m(j)\phi_{t_n}(j) \De^{-1} )]\\
 &=& [- res(\De^{-1} P_0( B^*_m(j)\psi(j)) B_n(j)\phi(j) \De^{-1} )+ res(\De^{-1}\psi(j) (B_n)_{t_m}(j)\phi(j) \De^{-1} )
\\
&&+ res(\De^{-1}\psi B_n(j)P_0( B_m(j)\phi(j)) \De^{-1} )+  res(\De^{-1}P_0( B^*_n(j)\psi(j)) B_m(j)\phi(j) \De^{-1} )
\\
 &&- res(\De^{-1}\psi(j) (B_m)_{t_n}(j)\phi(j) \De^{-1} )- res(\De^{-1}\psi(j) B_m(j)P_0( B_n(j)\phi(j)) \De^{-1} )]\\
&=&[ res(\De^{-1}\psi(j) [(B_n)_{t_m}-(B_m)_{t_n}+[B_n,B_m]](j)\phi(j) \De^{-1} )]\\
&=&0.
\end{eqnarray*}
This is the end of the proof.
\end{proof}

Because of the following formula
\[f_1\De^{-1}g_1\circ f_2\De^{-1}g_2=f_1S(g_1,f_2)\De^{-1}g_2-f_1\De^{-1}\circ \La(S(g_1,f_2))g_2,\]
we can get some properties of the difference operator in the following proposition.

\begin{proposition}
The following identities hold true
\[
S(S(\La(g_1),\La(f_2))g_2,f_3)=S(S(g_1,f_2)\La^{-1}(g_2),\La^{-1}(f_3)),\\
S(\La(S(g_1,f_2))g_2,f_3)+\La(S(g_1,f_2S(g_2,f_3)))=\La(S(g_1,f_2)) \La(S(g_2,f_3)),
\]
for arbitrary functions $g_1,g_2,f_2,f_3$.
\end{proposition}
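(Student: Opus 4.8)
The plan is to derive both identities from the associativity of composition in the pseudo-difference ring $F(\Delta)$, using the displayed product rule $f_1\Delta^{-1}g_1\circ f_2\Delta^{-1}g_2 = f_1 S(g_1,f_2)\Delta^{-1}g_2 - f_1\Delta^{-1}\circ\Lambda(S(g_1,f_2))g_2$ (which is just Lemma 2.1(4) together with the representation $S(g_1,f_2)=\Delta^{-1}(g_1f_2)$) applied repeatedly. Concretely I would form the threefold product $\Delta^{-1}\circ g_1\circ f_2\Delta^{-1}\circ g_2\circ f_3\Delta^{-1}$ and expand it in two ways: first by composing the left pair $\Delta^{-1}g_1$ and $f_2\Delta^{-1}g_2$ and then multiplying by $f_3\Delta^{-1}$ on the right, and alternatively by composing $f_2\Delta^{-1}g_2$ with $f_3\Delta^{-1}$ first and then multiplying by $\Delta^{-1}g_1$ on the left. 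Applying the product rule at each step, and using Lemma 2.1(4) once more in the form $\Delta^{-1}\circ h\circ\Delta^{-1}=(\Delta^{-1}h)\Delta^{-1}-\Delta^{-1}\circ\Lambda(\Delta^{-1}h)$ to bring every sandwiched $\Delta^{-1}\circ h\circ\Delta^{-1}$ to canonical shape, each side collapses to a sum of the two elementary shapes $a\,\Delta^{-1}$ and $\Delta^{-1}\circ b$, with $a$ and $b$ explicit functions assembled from nested $S$'s and $\Lambda^{\pm1}$'s of $g_1,f_2,g_2,f_3$.

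I would then compare the two expansions using uniqueness of the symbol of a pseudo-difference operator. Since $\Delta^{-1}\circ b=(\Lambda^{-1}b)\Delta^{-1}-(\Lambda^{-2}\Delta b)\Delta^{-2}+\cdots$ by \eqref{81}, equating coefficients of $\Delta^{-2}$ forces the two ``$\Delta^{-1}\circ$''-parts to agree up to an additive constant, and then the coefficient of $\Delta^{-1}$ forces the two ``$a\,\Delta^{-1}$''-parts to agree. Matching the ``$\Delta^{-1}\circ$''-parts is precisely the additive relation among $S(\Lambda(S(g_1,f_2))g_2,f_3)$, $S(g_1,f_2S(g_2,f_3))$ and $S(g_1,f_2)S(g_2,f_3)$, i.e.\ the second displayed identity. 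The first displayed identity is in essence the shift-covariance $S(\Lambda(g_1),\Lambda(f_2))=\Delta^{-1}(\Lambda(g_1f_2))=\Lambda(S(g_1,f_2))$, immediate from $\Delta\circ\Lambda=\Lambda\circ\Delta$ (Lemma 2.1(1)), combined with $\Lambda^{-1}\circ\Lambda=\mathrm{id}$: after pulling all the shifts to one side it reduces to the fact that $\Delta^{-1}$ commutes with $\Lambda$, so it can simply be checked directly. As an independent verification of either identity one can instead apply $\Delta$ to both sides and use $\Delta(S(\psi,\phi))=\psi\phi$ together with the discrete Leibniz rule $\Delta(uv)=\Lambda(u)\,\Delta v+(\Delta u)\,v$ (the $j=1$ case of \eqref{81}), whereupon both sides reduce to the same function and the identities are pinned down up to the additive constant inherent in the definition of $S$.

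The only real difficulty is the shift bookkeeping. Because $\Delta^{-1}\circ f\neq f\circ\Delta^{-1}$ and carrying a function through $\Delta^{-1}$ always produces a $\Lambda$ or $\Lambda^{-1}$, the nested arguments of the $S$'s have to be tracked with care, and one must remember that $S$ is determined only up to a function killed by $\Delta$. Once the two expansions of the triple product have been written in the same canonical $a\,\Delta^{-1}+\Delta^{-1}\circ b$ form, equating them term by term gives the two identities at once.
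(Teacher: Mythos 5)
Your overall strategy---expand the triple product $f_1\Delta^{-1}g_1\circ f_2\Delta^{-1}g_2\circ f_3\Delta^{-1}g_3$ in the two association orders via the product rule $f_1\Delta^{-1}g_1\circ f_2\Delta^{-1}g_2=f_1S(g_1,f_2)\Delta^{-1}g_2-f_1\Delta^{-1}\circ\Lambda(S(g_1,f_2))g_2$ and then match the two resulting canonical forms---is exactly what the paper does, and your identification of the second identity with the matching of the ``$\Delta^{-1}\circ$''-parts (after the summand $-f_1S(g_1,f_2)\Delta^{-1}\Lambda(S(g_2,f_3))g_3$ common to both sides cancels) agrees with the paper's derivation.

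Two of your steps would fail as written, however. First, the claim that the first identity is immediate from shift-covariance does not close. Covariance gives $S(\Lambda(g_1),\Lambda(f_2))=\Lambda(S(g_1,f_2))$, and writing $S(g_1,f_2)\Lambda^{-1}(g_2)=\Lambda^{-1}\bigl(\Lambda(S(g_1,f_2))g_2\bigr)$ one finds that the right-hand side of the first identity equals $\Lambda^{-1}$ applied to its left-hand side, not the left-hand side itself; so ``pulling all the shifts to one side'' leaves an unabsorbed $\Lambda^{-1}$, and the direct check you propose would establish something visibly different from the stated equality. The paper instead extracts this identity from the comparison of the $\Delta^{-1}$-terms of the two expansions (using that $g_3$ is arbitrary to separate the $a\,\Delta^{-1}g_3$ contributions from the $\Delta^{-1}\circ(bg_3)$ ones), and you would need to do the same---or explain where the extra shift goes. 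Second, your proposed rigorization of the comparison, namely equating the coefficients of $\Delta^{-1}$ and $\Delta^{-2}$ in the full symbol expansion, is vacuous: each side is a product of three operators of order $-1$ and hence has order $-3$, so those two coefficients vanish identically on both sides and give only $0=0$. The content lies in matching the structured decomposition $\sum_i a_i\Delta^{-1}b_i$ term by term (equivalently, in the symbol coefficients at orders $-3$ and below, exploiting the independence of $g_3$ and $\Delta g_3$). Your suggested cross-check---apply $\Delta$ to both sides and use $\Delta(S(\psi,\phi))=\psi\phi$ together with the discrete Leibniz rule---is a sound instinct and is in fact the reliable way to pin down exactly where the $\Lambda$'s must sit in the final identities, which is the delicate point throughout; I encourage you to carry it out explicitly rather than leave it as a remark.
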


\begin{proof}

Direct calculation will lead to the following two identities
\[\notag
&&(f_1\De^{-1}g_1\circ f_2\De^{-1}g_2)\circ f_3\De^{-1}g_3\\ \notag
&=&f_1S(g_1,f_2)S(g_2,f_3)\De^{-1}g_3-f_1S(g_1,f_2)\De^{-1}\circ\La(S(g_2,f_3))g_3\\ \notag
&&-f_1S(\La(S(g_1,f_2))g_2,f_3)\De^{-1}g_3+f_1\De^{-1}\circ S(\La(S(g_1,f_2))g_2,f_3)g_3,
\]
\[\notag
&&f_1\De^{-1}g_1\circ (f_2\De^{-1}g_2\circ f_3\De^{-1}g_3)\\ \notag
&=&f_1S(g_1,f_2S(g_2,f_3))\De^{-1}g_3-f_1\De^{-1}\circ\La(S(g_1,f_2S(g_2,f_3)))g_3\\ \notag
&&-f_1S(g_1,f_2)\De^{-1}\circ \La(S(g_2,f_3))g_3+f_1\De^{-1}\circ\La(S(g_1,f_2)) \La(S(g_2,f_3))g_3.
\]

Let \[(f_1\De^{-1}g_1\circ f_2\De^{-1}g_2)\circ f_3\De^{-1}g_3=f_1\De^{-1}g_1\circ (f_2\De^{-1}g_2\circ f_3\De^{-1}g_3)\]
 and compare the $\De^{-1}$ term, we can get

\[\label{three}
S(S(\La(g_1),\La(f_2))g_2,f_3)=S(S(g_1,f_2)\La^{-1}(g_2),\La^{-1}(f_3)).
\]

Comparing the $\De^{-2}$ terms of both sides of eq.\eqref{three}, we can get

\[\notag
S(\La(S(g_1,f_2))g_2,f_3)+\La(S(g_1,f_2S(g_2,f_3)))=\La(S(g_1,f_2)) \La(S(g_2,f_3)).
\]

\end{proof}

 Using this $S$ function, the ghost symmetry can be constructed in the next section.

\sectionnew{The ghost symmetry of the discrete KP hierarchy}
In this section,  the ghost flows on Lax operator  of discrete KP hierarchy will be introduced firstly. Then one can prove that they are symmetries of the discrete KP hierarchy. After this, we naturally  further consider the action of ghost flows on wave function(Baker-Akhiezer wave function). In the following part, we always omit the same discrete parameter in one equation.

Inspired  by the definition of the ghost symmetry (or squared eigenfunction symmetry) flow of the KP hierarchy\cite{YaoY,WOeveldarboux,Aratyn},  the discrete ghost flow of the discrete KP hierarchy can be defined as following
\[\label{ghostflow}
\d_Z L=[\phi \De^{-1}\psi, L],\]
where functions $\phi,\psi$  are the eigenfunction and adjoint eigenfunction  of  the discrete KP hierarchy. They correspond to the same $B_n$. Therefore we leave the discrete parameter out in the following part. In fact, this ghost flow can also be derived from extended discrete KP hierarchy \cite{YaoY}.
The following proposition will tell you the above flow is a symmetry of discrete KP hierarchy.
\begin{proposition}\label{symmetry}
The additional flows $\d_Z$ commute
with the discrete KP flows $\d_{ t_n}$, i.e.,
\begin{eqnarray}
[\d_Z, \d_{ t_n}]L=0,
\end{eqnarray}
\end{proposition}
\begin{proof}
The commutativity of ghost flows and the discrete KP flows are in fact equivalent to the following Zero-Curvature equation which includes detailed proof
\begin{eqnarray*}&&\d_ZB_n-\d_{ t_n}( \phi \De^{-1}\psi)+[B_n, \phi \De^{-1}\psi]\\
&=&[ \phi \De^{-1}\psi, L^n]_+- \phi_{t_n} \De^{-1}\psi- \phi \De^{-1}\psi_{t_n}+[B_n, \phi \De^{-1}\psi]\\
&=&[ \phi \De^{-1}\psi,B_n]_+-P_0(B_n\phi) \De^{-1}\psi+\phi \De^{-1}P_0(B^*_n\psi)+[B_n,\phi \De^{-1}\psi]\\
&=&( B_n \phi \De^{-1}\psi)_--(\phi \De^{-1}\psi B_n)_--P_0(B_n\phi) \De^{-1}\psi+\phi \De^{-1}P_0(B^*_n\psi)\\
&=&0.\end{eqnarray*}

\end{proof}

The ghost symmetry on the wave operator $W$ can be got as following
\[\d_Z W=\phi \De^{-1}\psi W.\]
According to  eq.\eqref{ghostflow}, the ghost flows acting on the
Baker-Akhiezer function $\Phi_{BA}(n;t,z)$ and the adjoint Baker-Akhiezer function $\Psi_{BA}(n;t,z)$ are in the following proposition.
\begin{proposition}\label{flowonBA}
The Baker-Akhiezer function $\Phi_{BA}(n;t,z)$ and adjoint Baker-Akhiezer function $\Psi^*_{BA}(n;t,z)$ satisfy the following equations
\[\label{dZPhi}\d_Z\Phi_{BA}(n;t,z)&=&\phi S(\psi,\Phi_{BA}(n;t,z)),\\
\d_Z\Psi_{BA}(n;t,z)&=&\psi S(\La(\phi),\Psi_{BA}(n;t,z)).
\]
\end{proposition}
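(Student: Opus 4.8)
The plan is to derive both formulas from the single operator identity $\d_Z W=\phi\De^{-1}\psi W$ recorded just above the statement, using only the existence and defining relations of the $S$ function. For $\Phi_{BA}$ this is almost immediate: since $\Phi_{BA}(n;t,z)=W(n;t)(1+z)^n\exp(\sum_i t_iz^i)$ and $\d_Z$ annihilates the scalar factor $(1+z)^n\exp(\sum_i t_iz^i)$, applying $\d_Z W=\phi\De^{-1}\psi W$ to that factor gives $\d_Z\Phi_{BA}=\phi\,\De^{-1}(\psi\,\Phi_{BA})$. For each fixed $z$ the function $\Phi_{BA}$ is an eigenfunction of the discrete KP hierarchy, because $\d_{t_j}\Phi_{BA}=B_j\Phi_{BA}$ is exactly $\d_{t_j}\Phi_{BA}=P_0(B_j\circ\Phi_{BA})$, while $\psi$ is an adjoint eigenfunction; hence the earlier proposition guaranteeing the existence of the $S$ function for an eigenfunction/adjoint eigenfunction pair applies to $(\psi,\Phi_{BA})$, and $S(\psi,\Phi_{BA})_{\De}=\psi\Phi_{BA}$, i.e. $\De^{-1}(\psi\Phi_{BA})=S(\psi,\Phi_{BA})$. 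Substituting back gives the first identity \eqref{dZPhi}, $\d_Z\Phi_{BA}=\phi\,S(\psi,\Phi_{BA})$.

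For the adjoint wave function I would differentiate $\Psi_{BA}(n;t,z)=(W^{-1}(n-1;t))^*(1+z)^{-n}\exp(-\sum_i t_iz^i)$. From $\d_Z W=\phi\De^{-1}\psi W$ one has $\d_Z W^{-1}=-W^{-1}\phi\De^{-1}\psi$, and taking the formal adjoint with $(F\circ G)^*=G^*\circ F^*$ and $(\De^{-1})^*=-\La\circ\De^{-1}$ (Lemma 2.1 (3)) yields $\d_Z(W^{-1})^*=\psi\,\La\,\De^{-1}\,\phi\,(W^{-1})^*$. Applying this operator identity to $(1+z)^{-n}\exp(-\sum_i t_iz^i)$ — while incorporating the shift $n\mapsto n-1$ built into $(W^{-1}(n-1;t))^*$ — and using that $\La$ commutes with $\De^{-1}$ (Lemma 2.1 (1)), the composite $\La\circ\De^{-1}\circ\phi$ acting on $\Psi_{BA}$ turns the antidifference of $\phi\,\Psi_{BA}$ into $S(\La(\phi),\Psi_{BA})$: if $h$ antidifferentiates $\phi g$ then $\La h$ antidifferentiates $(\La\phi)(\La g)$, and the shift in the definition of $\Psi_{BA}$ is precisely what lines the arguments up so that $\La\phi$ appears in the first slot while the second slot is $\Psi_{BA}$ itself. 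This produces $\d_Z\Psi_{BA}=\psi\,S(\La(\phi),\Psi_{BA})$; its legitimacy as an $S$ function rests on the observation that $\La\Psi_{BA}$, rather than $\Psi_{BA}$, is the genuine adjoint eigenfunction of the hierarchy, so the earlier existence proposition does apply.

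The routine part here is the operator algebra; the real obstacle is the bookkeeping of discrete shifts in the second identity. One must simultaneously track the shift hidden in $(W^{-1}(n-1;t))^*$, the shift $\La$ produced by $(\De^{-1})^*=-\La\De^{-1}$, and the fact that the adjoint Baker--Akhiezer function of the discrete hierarchy is not an adjoint eigenfunction until $\La$ is applied. Making these cancel correctly is exactly what forces $\La(\phi)$ (and not $\phi$) to occur, and it is the one place where I expect genuine care to be needed; in particular one should check that after all shifts are accounted for the second argument of $S$ is $\Psi_{BA}$ and not $\La\Psi_{BA}$. A minor secondary point, in the first identity, is to confirm that the formal $\De^{-1}$ applied to the function $\psi\Phi_{BA}$ reproduces $S(\psi,\Phi_{BA})$ with the normalization fixed by the earlier proposition rather than up to an $n$-independent additive term; this is settled by comparing $z^{-1}$-expansions, or equivalently by invoking the $t_n$-evolution of $S$ provided by the same proposition.
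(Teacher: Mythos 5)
Your treatment of the first identity coincides with the paper's own proof: apply $\d_Z W=\phi\De^{-1}\psi W$ to the factor $(1+z)^n\exp(\sum_i t_iz^i)$ and identify $\De^{-1}(\psi\,\Phi_{BA})$ with $S(\psi,\Phi_{BA})$. Your added remarks --- that $\Phi_{BA}$ is itself an eigenfunction so the existence proposition for $S$ applies to the pair $(\psi,\Phi_{BA})$, and that the normalization of the antidifference is pinned down by the $z^{-1}$-expansion or the $t_n$-evolution --- are sensible and go slightly beyond what the paper writes.

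For the second identity your route is again the paper's (take the adjoint of $\d_Z W^{-1}=-W^{-1}\phi\De^{-1}\psi$ using $(\De^{-1})^*=-\La\circ\De^{-1}$, then push $\La$ to the right), but you stop exactly at the step you yourself single out as the critical one --- ``check that after all shifts are accounted for the second argument of $S$ is $\Psi_{BA}$ and not $\La\Psi_{BA}$'' --- and you assert the outcome of that check instead of performing it. Carrying it out: $\La$ commutes with $\De^{-1}$, turns the multiplication operator $\phi$ into $\La(\phi)$, turns $(W^{-1}(n-1;t))^*$ into $(W^{-1}(n;t))^*$, and sends $(1+z)^{-n}$ to $(1+z)^{-n-1}$, so one lands on
\begin{eqnarray*}
\d_Z\Psi_{BA}(n;t,z)=\psi\,\De^{-1}\bigl(\La(\phi)\,\Psi_{BA}(n+1;t,z)\bigr)=\psi\,S\bigl(\La(\phi),\Psi_{BA}(n+1;t,z)\bigr),
\end{eqnarray*}
i.e.\ the second slot is $\La\Psi_{BA}$, not $\Psi_{BA}$. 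This is precisely what the paper's own proof obtains in its final line (its displayed statement, with $\Psi_{BA}(n;t,z)$, differs from its proof by this shift, presumably under the ``omit the discrete parameter'' convention). So the shifts do \emph{not} cancel in the way you claim. Your observation that $\La\Psi_{BA}$, rather than $\Psi_{BA}$, is the genuine adjoint eigenfunction is correct and consistent with the later spectral representation, but it points toward the $\Psi_{BA}(n+1)$ outcome rather than the one you assert. In short: same method as the paper, first half complete, second half left with its decisive shift computation undone and its conclusion stated with the wrong shift.
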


\begin{proof}

The ghost symmetry on wave operator $W^{-1}$ and $W^{-1*}$ can be got as following

\[\d_Z W^{-1}=-W^{-1}\phi \De^{-1}\psi, \]
\[\d_Z W^{-1*}=\psi \La\De^{-1}\phi W^{-1*}. \]

Considering eq.\eqref{PhiBA} and eq.\eqref{PhiBA2} and taking derivative by $\d_Z$ will lead to the following calculation which will finish the proof.
\begin{align*}
\d_Z\Phi_{BA}(n;t,z)&=(\d_ZW(n;t))(1+z)^n exp(\sum^\infty_{i=1}t_i z^i)\notag\\
        &=\phi \De^{-1}\psi W(1+z)^n exp(\sum^\infty_{i=1}t_i z^i)\\
        &=\phi S(\psi,\Phi_{BA}(n;t,z)),
\end{align*}
\begin{align*}
\d_Z\Psi_{BA}(n;t,z)&=(\d_ZW^{-1*}(n-1;t))(1+z)^{-n} exp(\sum^\infty_{i=1}-t_i
z^i)\notag\\
&=\psi \La\De^{-1}\phi(W^{-1}(n-1;t))^*(1+z)^{-n} exp(\sum^\infty_{i=1}-t_i
z^i)\notag\\
&=\psi \De^{-1}\La(\phi)(W^{-1}(n;t))^*(1+z)^{-n-1} exp(\sum^\infty_{i=1}-t_i
z^i)\notag\\
&=\psi S(\La(\phi),\Psi_{BA}(n+1;t,z)).
\end{align*}
\end{proof}
From above, the $S$ function\cite{Aratyn} is not used directly in the definition of the ghost flow on the Baker-Akhiezer function $\Phi_{BA}(n;t,z)$ and adjoint Baker-Akhiezer function $\Psi_{BA}(n;t,z)$.

Till now, the action of the ghost flow on the Baker-Akhiezer function $\Phi_{BA}(n;t,z)$ and the adjoint Baker-Akhiezer function $\Psi_{BA}(n;t,z)$ is derived. Therefore it is natural to further consider the ghost flow on the tau function. Before that we need some preparation which is contained in the next section.

\sectionnew{Properties and Spectral representation of discrete KP hierarchy}
This section will be about the spectral representation of eigenfunction for  the discrete KP hierarchy. Before that, we firstly introduce some properties of the tau function and wave functions of the discrete KP hierarchy. These all give a good preparation to derive the ghost flow on the tau function and a new proof of the ASvM formula which will be given in the next section.

The Hirota bilinear equations of discrete KP hierarchy are as following \cite{Iliev}.
\begin{lemma}
 The Hirota bilinear identities of the discrete KP hierarchy are
\[\label{hirotabi}
res_z (\De^j\Phi_{BA}(n,t',z))\Psi_{BA}(n,t,z)=0,\ \ j\geq 0.
\]
\end{lemma}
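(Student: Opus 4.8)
The plan is to establish the Hirota bilinear identity \eqref{hirotabi} directly from the dressing structure of the discrete KP hierarchy, exactly as in the continuous case but keeping careful track of the shift operator $\Lambda$. First I would recall that the bilinear expression $res_z$ applied to a product of wave functions can be rephrased as the residue (coefficient of $\triangle^{-1}$) of a pseudo difference operator. Concretely, using the two Baker--Akhiezer functions written in dressed form $\Phi_{BA}(n,t',z)=W(n;t')(1+z)^n\exp(\sum t_i' z^i)$ and $\Psi_{BA}(n,t,z)=(W^{-1}(n-1;t))^*(1+z)^{-n}\exp(-\sum t_i z^i)$, the key analytic fact is the discrete analogue of the bilinear residue formula: for two pseudo difference operators $P,Q$,
\[
res_z\big(P(1+z)^n e^{\xi(t',z)}\big)\big(Q^*(1+z)^{-n}e^{-\xi(t,z)}\big)
= res_\triangle\big(P\circ e^{(\xi(t',z)-\xi(t,z))\,\text{shift}}\circ Q\big),
\]
where the right side is understood via the formal correspondence between $z$-residues of wave-type functions and $\triangle$-residues of operators. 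I would prove this correspondence as a lemma (or invoke \cite{Iliev}) by expanding both sides in the $w_j$ and matching coefficients; this is the technical heart and the main obstacle, since the factor $(1+z)^n$ plays the role that $e^{nx}$-type terms do not have in the continuous theory, so one must verify that $\Lambda$ acts compatibly — essentially that $(1+z)$ is the symbol of $\Lambda$ under $z\leftrightarrow\triangle$ duality.

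Granting that correspondence, the proof proceeds as follows. Taking $t'=t$, the exponential factors cancel and the identity reduces to $res_\triangle(\triangle^j\circ W\circ W^{-1})=res_\triangle(\triangle^j)=0$ for $j\ge 0$, which is immediate because $\triangle^j$ for $j\ge 0$ is a purely differential (nonnegative-order) operator and hence has vanishing $\triangle^{-1}$-coefficient. For general $t'$, I would either (i) argue that the difference $t'-t$ only contributes through the exponential $e^{\xi(t'-t,z)}$, which under the operator correspondence becomes a zeroth-plus-negative order factor composed with $W\circ\triangle^j\circ W^{-1}=L^j$-type operators, and then use that $res_\triangle$ of $P_{\ge 1}\circ(\text{something})$ paired appropriately still vanishes; or (ii) — cleaner — observe that both sides are formal series in $t'-t$ and it suffices to check the identity and all its $t'$-derivatives at $t'=t$, reducing everything to residues of $res_\triangle(\triangle^j\circ B_{i_1}\cdots B_{i_k})$ acting on the positive part, which vanish by the same nonnegative-order argument combined with $[B_i,L]$ being the flow. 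Here Proposition~\ref{formula}, in particular the residue identities $res((\Delta A)(j))=res(\Delta\circ A(j)-A(j+1)\circ\Delta)$ and $res(\Delta^{-1}A)=-(\La^{-1}P_0(A^*))$, supplies the bookkeeping needed to handle the shift $n\mapsto n-1$ appearing in the definition of $\Psi_{BA}$.

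I would organize the writeup as: a short lemma stating the $z$-residue/$\triangle$-residue correspondence with proof by coefficient comparison; then the specialization $t'=t$; then the general-$t'$ case via Taylor expansion in $t'-t$. The main obstacle, as noted, is the correspondence lemma — specifically confirming that the $(1+z)^n$ prefactors (absent in continuous KP) are exactly matched by the $\Lambda$-conjugation built into the discrete dressing operators, so that no anomalous terms survive. Once that is in hand, the vanishing is a formal consequence of projecting onto nonnegative orders, precisely parallel to the continuous KP derivation of the Hirota bilinear identity from $res_z\,\Phi_{BA}(t',z)\Psi_{BA}(t,z)=\delta(t-t')$-type statements.
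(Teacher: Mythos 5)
The paper itself offers no proof of this lemma: the identity \eqref{hirotabi} is simply quoted from Haine--Iliev \cite{Iliev}, so there is no internal argument to compare yours against. That said, your strategy is the standard Sato-theoretic route and is the right one in principle: convert the $z$-residue of a product of wave and adjoint wave functions into the $\triangle^{-1}$-coefficient of a product of pseudo difference operators, note that at $t'=t$ the relevant operator collapses to $\triangle^j\circ W\circ W^{-1}=\triangle^j$, which has vanishing residue for $j\ge 0$, and handle general $t'$ by Taylor expansion in $t'-t$, since every $\partial_{t'}$-derivative of $\Phi_{BA}(n,t',z)$ at $t'=t$ is a nonnegative-order difference operator (built from the $B_i$ and their $t$-derivatives) applied to $\Phi_{BA}(n,t,z)$. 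Your option (ii) is the clean way to close the argument.

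The genuine gap sits exactly where you flag the ``technical heart.'' The correspondence formula you write down is not well-defined as stated: the right-hand side $res_{\triangle}\bigl(P\circ e^{(\xi(t',z)-\xi(t,z))\,\mathrm{shift}}\circ Q\bigr)$ still contains the spectral parameter $z$ inside an operator residue, so it cannot serve as the lemma you need. What is actually required is a single-time identity of the shape $res_z\,\bigl(P(1+z)^n e^{\xi(t,z)}\bigr)\bigl(Q^*(1+z)^{-n}e^{-\xi(t,z)}\bigr)=res_{\triangle}(P\circ Q)$, up to the shift $n\mapsto n-1$ hidden in the definition of $\Psi_{BA}$ and the attendant $\Lambda^{-1}$ (for which the identities of Proposition \ref{formula} are indeed the right bookkeeping); the two-time version should never appear, because the Taylor expansion in $t'-t$ reduces everything to the single-time case before the correspondence is invoked. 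You also need to verify, not merely assert, that $\triangle$ and $\triangle^{*}$ act on $(1+z)^{n}e^{\xi(t,z)}$ and $(1+z)^{-n}e^{-\xi(t,z)}$ as multiplication by $z$ --- this is the discrete replacement for $\partial_x e^{xz}=ze^{xz}$ and is precisely where the prefactor $(1+z)^n$, the symbol of $\Lambda^n$, enters. Until that correspondence lemma is stated precisely and proved by the coefficient comparison you describe, what you have is a correct plan rather than a proof.
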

Using this bilinear identity and relation between the tau function and the Baker-Akhiezer wave function and adjoint Baker-Akhiezer wave function, the following proposition can be got\cite{sato,LiuS2}.
\begin{proposition}
The tau functions of discrete KP hierarchies satisfy the following Fay identity\cite{sato,LiuS2}
\[
(s_0-s_1)(s_2-s_3)\tau(n,t+[s_0]+[s_1])\tau(n,t+[s_2]+[s_3])+c.p.=0,
\]
where the cyclic permutation is only over $s_1$, $s_2$ and $s_3$.
\end{proposition}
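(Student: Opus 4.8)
The plan is to derive the Fay identity directly from the Hirota bilinear identity \eqref{hirotabi} by a judicious specialization of the free spectral parameter and the time variable $t'$. First I would rewrite the bilinear identity with $j=0$, namely $\operatorname{res}_z \Phi_{BA}(n,t',z)\Psi_{BA}(n,t,z)=0$, and substitute the tau-function expressions \eqref{taudefinition1} and \eqref{taudefinition2} for $\Phi_{BA}$ and $\Psi_{BA}$. The factors $(1+z)^{\pm n}$ cancel, and the exponentials combine into $\exp(\sum_i (t_i'-t_i)z^i)$, so the identity becomes
\[
\operatorname{res}_z \; \frac{\tau(n,t'-[z^{-1}])\,\tau(n,t+[z^{-1}])}{\tau(n,t')\,\tau(n,t)}\,
\exp\!\Big(\sum_{i\ge 1}(t_i'-t_i)z^i\Big)=0 .
\]
Clearing the $z$-independent denominators $\tau(n,t')\tau(n,t)$, one gets a residue identity purely in tau functions.

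Next I would choose $t'-t = [s_2]+[s_3]$ for two free parameters $s_2,s_3$ (using the standard Miwa-type shift), so that the exponential factor becomes the generating function $\exp(\sum_i (s_2^i+s_3^i)z^i) = (1-s_2 z)^{-1}(1-s_3 z)^{-1}$ — here I would be careful to match the sign/normalization convention for $[z^{-1}]$ used in the paper, i.e. $[z^{-1}]=(z^{-1},\tfrac12 z^{-2},\dots)$, so that $\exp(\sum t_i z^i)$ with $t=[s]$ gives $(1-sz)^{-1}$. The residue $\operatorname{res}_z$ then picks up the poles at $z=s_2^{-1}$ and $z=s_3^{-1}$ coming from this rational factor (the remaining factors are analytic in $z$ near those points, being power series in $z^{-1}$ composed with the tau function). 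Evaluating the two simple-pole residues and writing $t$ in place of $t'$ shifted back, each residue contributes a term of the form $\pm \tau(n,t+[s_2]+[s_3]-[s_k^{-1}]\,\text{-type shift})\cdots$; after the change of variables $s_0,s_1$ for the shifts hidden inside $\tau(n,t'-[z^{-1}])$ evaluated at $z=s_k^{-1}$, the two contributions assemble into two of the three cyclic terms, and a symmetry/relabeling argument (or repeating with $t'-t=[s_0]+[s_1]$) produces the full cyclic sum
\[
(s_0-s_1)(s_2-s_3)\,\tau(n,t+[s_0]+[s_1])\,\tau(n,t+[s_2]+[s_3])+\text{c.p.}=0 .
\]

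The main obstacle I anticipate is bookkeeping: tracking exactly which Miwa shifts land inside each tau factor after taking the residue, and getting the Vandermonde-type prefactors $(s_i-s_j)$ with the correct signs so that the three terms close up under cyclic permutation. Since the discrete KP case differs from the continuous KP case only by the extra $(1+z)^{\pm n}$ factors — which cancel identically in the bilinear identity with fixed $n$ — the algebraic manipulation is essentially identical to the classical KP derivation of the Fay identity, and I would follow that template, citing \cite{LiuS2} for the discrete version. A clean alternative, if the residue computation gets unwieldy, is to expand both sides as formal power series in $s_0,s_1,s_2,s_3$ and verify the coefficient identities against the Plücker relations implied by \eqref{hirotabi}; but the direct residue route is shorter and is the one I would present.
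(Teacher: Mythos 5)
Your overall plan --- rewrite the $j=0$ case of the Hirota bilinear identity \eqref{hirotabi} in tau-function form via \eqref{taudefinition1} and \eqref{taudefinition2}, observe that the $(1+z)^{\pm n}$ factors cancel so that everything reduces to the classical KP computation, and then extract the Fay identity by Miwa shifts and residues --- is exactly the route the paper intends (the paper in fact gives no proof at all and simply cites \cite{LiuS2}). However, your specific specialization fails. With $t'-t=[s_2]+[s_3]$ the exponential factor is $(1-s_2z)^{-1}(1-s_3z)^{-1}$, so the integrand has only the two finite poles $z=s_2^{-1},s_3^{-1}$. At $z=s_2^{-1}$ the tau factors become $\tau(n,t'-[s_2])\,\tau(n,t+[s_2])=\tau(n,t+[s_3])\,\tau(n,t+[s_2])$, and at $z=s_3^{-1}$ they give the \emph{same} product, while the residues of $(1-s_2z)^{-1}(1-s_3z)^{-1}$ at the two poles are $-1/(s_2-s_3)$ and $+1/(s_2-s_3)$; the contribution at $z=\infty$ is $O(z^{-2})$ and has vanishing formal residue. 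Hence this instance of the bilinear identity reads $0=0$. Note also that $[z^{-1}]$ evaluated at $z=s_k^{-1}$ is just $[s_k]$, so no ``hidden shifts'' relabelable as $s_0,s_1$ ever appear: the two residues cannot ``assemble into two of the three cyclic terms'', and repeating with $t'-t=[s_0]+[s_1]$ only produces a second vacuous two-parameter identity; no superposition of such instances can yield a relation genuinely coupling all four $s_i$.

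The fix is to put all four Miwa shifts into a single instance of the bilinear identity, e.g.\ $t\mapsto \bar t+[s_0]$, $t'\mapsto \bar t+[s_1]+[s_2]+[s_3]$, or $t'-t=[s_0]+[s_1]+[s_2]+[s_3]$. Then the exponential factor is a rational function with three (resp.\ four) simple poles at $z=s_i^{-1}$ whose residues no longer cancel pairwise, while the residue at infinity still vanishes because the integrand is $O(z^{-2})$ (resp.\ $O(z^{-4})$) there. Evaluating these residues gives a genuine quadratic relation among shifted tau functions, but with coefficients of the type $s_i(s_i-s_0)/\prod_{j\neq i}(s_i-s_j)$ rather than the symmetric factors $(s_0-s_1)(s_2-s_3)$; reaching the exact two-by-two form stated in the proposition still requires a further symmetrization and relabeling over which parameter is distinguished (equivalently, recognizing the outcome as the three-term Pl\"ucker relation $p_{01}p_{23}-p_{02}p_{13}+p_{03}p_{12}=0$). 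That step is precisely the ``bookkeeping'' you flagged as the main obstacle, and it is where the actual content of the proof lies; as written, your proposal does not reach it.
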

Set $s_0=0$ and divide by $s_1s_2s_3$, then shift time variables by $[s_2]+[s_3]$ and we get the following identity
\[
&&(s_2^{-1}-s_3^{-1})\tau(n,t+[s_1]-[s_2]-[s_3])\tau(n,t)\\ \notag
&&+(s_1^{-1}-s_2^{-1})\tau(n,t-[s_2])\tau(n,t+[s_1]-[s_3])\\ \notag
&&+(s_3^{-1}-s_1^{-1})\tau(n,t-[s_3])\tau(n,t+[s_1]-[s_2])=0.
\]

Because $\tau(n+1,t)=\tau(n,t-[-1])$\cite{LiuS2}, the following proposition can  be got.
\begin{proposition}\label{fay1}
The tau functions of the discrete KP hierarchy satisfy
\[
&&(1+s_3^{-1})\La(\frac{\tau(n,t+[s_1]-[s_3])}{\tau(n,t)})\\ \notag
&&=(s_3^{-1}-s_1^{-1})\frac{\tau(n,t-[s_3])\tau(n+1,t+[s_1])}{\tau(n,t)\tau(n+1,t)}
+(1+s_1^{-1})\frac{\tau(n,t+[s_1]-[s_3])\tau(n+1,t)}{\tau(n,t)\tau(n+1,t)}.
\]
\end{proposition}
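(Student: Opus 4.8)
The plan is to derive Proposition~\ref{fay1} directly from the three-term identity that immediately precedes it, namely
\[\notag
&&(s_2^{-1}-s_3^{-1})\tau(n,t+[s_1]-[s_2]-[s_3])\tau(n,t)\\ \notag
&&+(s_1^{-1}-s_2^{-1})\tau(n,t-[s_2])\tau(n,t+[s_1]-[s_3])\\ \notag
&&+(s_3^{-1}-s_1^{-1})\tau(n,t-[s_3])\tau(n,t+[s_1]-[s_2])=0,
\]
by making the substitution $s_2\to -1$ (so that $s_2^{-1}=-1$ and $[s_2]=[-1]$) and then invoking the shift relation $\tau(n+1,t)=\tau(n,t-[-1])$ repeatedly. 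First I would set $s_2=-1$ in each of the three terms: the factor $\tau(n,t-[s_2])$ becomes $\tau(n,t-[-1])=\tau(n+1,t)$; the factor $\tau(n,t+[s_1]-[s_2]-[s_3])$ becomes $\tau(n,t+[s_1]-[s_3]-[-1])=\tau(n+1,t+[s_1]-[s_3])$; and the factor $\tau(n,t+[s_1]-[s_2])$ becomes $\tau(n,t+[s_1]-[-1])=\tau(n+1,t+[s_1])$. Carrying the coefficients along, this turns the identity into
\[\notag
&&(-1-s_3^{-1})\tau(n+1,t+[s_1]-[s_3])\tau(n,t)\\ \notag
&&+(s_1^{-1}+1)\tau(n+1,t)\tau(n,t+[s_1]-[s_3])\\ \notag
&&+(s_3^{-1}-s_1^{-1})\tau(n,t-[s_3])\tau(n+1,t+[s_1])=0.
\]

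Next I would rearrange and normalize. Moving the first term to the other side, multiplying through by $-1$, and dividing both sides by $\tau(n,t)\tau(n+1,t)$ gives
\[\notag
(1+s_3^{-1})\frac{\tau(n+1,t+[s_1]-[s_3])}{\tau(n+1,t)}
=(1+s_1^{-1})\frac{\tau(n,t+[s_1]-[s_3])}{\tau(n,t)}
+(s_3^{-1}-s_1^{-1})\frac{\tau(n,t-[s_3])\tau(n+1,t+[s_1])}{\tau(n,t)\tau(n+1,t)}.
\]
It remains only to recognize the left-hand side as $(1+s_3^{-1})\La\bigl(\tau(n,t+[s_1]-[s_3])/\tau(n,t)\bigr)$, which is exactly the statement of Proposition~\ref{fay1} after writing the middle term with the common denominator $\tau(n,t)\tau(n+1,t)$ as $(1+s_1^{-1})\tau(n,t+[s_1]-[s_3])\tau(n+1,t)/(\tau(n,t)\tau(n+1,t))$; here one uses that $\La$ acts on the discrete variable $n$ and $\La f(n)=f(n+1)$, so $\La(\tau(n,t+[s_1]-[s_3])/\tau(n,t))=\tau(n+1,t+[s_1]-[s_3])/\tau(n+1,t)$.

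The steps are all routine algebra once the substitution $s_2=-1$ is chosen; the only point requiring care is the bookkeeping of which $\tau$-arguments get the extra $-[-1]$ shift (and hence the $n\to n+1$ promotion) and which do not, since $s_2$ appears in all three terms but with different combinations of $[s_2]$. I expect the main obstacle to be purely clerical: verifying that after the substitution the coefficients line up as $(1+s_3^{-1})$, $(1+s_1^{-1})$, and $(s_3^{-1}-s_1^{-1})$ with the correct signs, and that no spurious factor of the form $(1+s^{-1})$ has been dropped or doubled. One should also check consistency of the identity at a degenerate point (e.g.\ $s_1=s_3$) as a sanity test, but no genuinely new idea beyond the preceding Fay-type identity and the translation rule $\tau(n+1,t)=\tau(n,t-[-1])$ is needed.
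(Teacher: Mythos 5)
Your proposal is correct and follows exactly the route the paper intends: the paper derives Proposition \ref{fay1} from the preceding three-term identity via the relation $\tau(n+1,t)=\tau(n,t-[-1])$, which amounts precisely to your substitution $s_2=-1$ (so $s_2^{-1}=-1$, $[s_2]=[-1]$) followed by the rearrangement you describe. The only difference is that you make explicit the bookkeeping the paper leaves implicit; the coefficients $(1+s_3^{-1})$, $(1+s_1^{-1})$, $(s_3^{-1}-s_1^{-1})$ all check out.
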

The Proposition \ref{fay1} can be rewritten as the following new proposition \cite{LiuS2}.
\begin{proposition}\label{Fay}
The tau functions of discrete KP hierarchies satisfy the following difference Fay identity
\[
&&(1+s_3^{-1})\De(\frac{\tau(n,t+[s_1]-[s_3])}{\tau(n,t)})\\ \notag
&&=(s_3^{-1}-s_1^{-1})(\frac{\tau(n,t-[s_3])\tau(n+1,t+[s_1])}{\tau(n,t)\tau(n+1,t)}
-\frac{\tau(n,t+[s_1]-[s_3])}{\tau(n,t)}).
\]
\end{proposition}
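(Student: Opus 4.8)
The plan is to obtain Proposition~\ref{Fay} as a purely algebraic rewriting of Proposition~\ref{fay1}, using only the definition $\De=\La-I$ of the difference operator together with the trivial cancellation $\tau(n+1,t)/\tau(n+1,t)=1$. Concretely, I would start from the left-hand side of the asserted identity and expand the $\De$ by definition:
\begin{align*}
(1+s_3^{-1})\De\Bigl(\frac{\tau(n,t+[s_1]-[s_3])}{\tau(n,t)}\Bigr)
&=(1+s_3^{-1})\La\Bigl(\frac{\tau(n,t+[s_1]-[s_3])}{\tau(n,t)}\Bigr)\\
&\quad-(1+s_3^{-1})\frac{\tau(n,t+[s_1]-[s_3])}{\tau(n,t)}.
\end{align*}
Then I substitute for the first term on the right the expression supplied by Proposition~\ref{fay1}, and simplify the factor $\dfrac{\tau(n,t+[s_1]-[s_3])\,\tau(n+1,t)}{\tau(n,t)\,\tau(n+1,t)}$ occurring there down to $\dfrac{\tau(n,t+[s_1]-[s_3])}{\tau(n,t)}$.

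After this substitution the whole expression is a combination of just two quantities: $\dfrac{\tau(n,t-[s_3])\,\tau(n+1,t+[s_1])}{\tau(n,t)\,\tau(n+1,t)}$, which already carries the coefficient $(s_3^{-1}-s_1^{-1})$, and $\dfrac{\tau(n,t+[s_1]-[s_3])}{\tau(n,t)}$, which collects the coefficient $(1+s_1^{-1})-(1+s_3^{-1})=s_1^{-1}-s_3^{-1}=-(s_3^{-1}-s_1^{-1})$. Factoring out the common prefactor $(s_3^{-1}-s_1^{-1})$ then yields exactly
\begin{align*}
(1+s_3^{-1})\De\Bigl(\frac{\tau(n,t+[s_1]-[s_3])}{\tau(n,t)}\Bigr)
=(s_3^{-1}-s_1^{-1})\Bigl(\frac{\tau(n,t-[s_3])\,\tau(n+1,t+[s_1])}{\tau(n,t)\,\tau(n+1,t)}
-\frac{\tau(n,t+[s_1]-[s_3])}{\tau(n,t)}\Bigr),
\end{align*}
which is the claimed difference Fay identity.

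I expect essentially no genuine obstacle in this step: all the analytic content is already packed into Proposition~\ref{fay1} (itself obtained from the bilinear Fay identity via $s_0=0$, division by $s_1s_2s_3$, the time shift by $[s_2]+[s_3]$, and the discrete-shift relation $\tau(n+1,t)=\tau(n,t-[-1])$). The only thing to be careful about is the bookkeeping of the shifted time arguments and the rational prefactors $1+s_i^{-1}$ and $s_i^{-1}-s_j^{-1}$, so that the two cancellations — eliminating the $\La$ term in favour of the $\De$ term, and merging the two copies of $\tau(n,t+[s_1]-[s_3])/\tau(n,t)$ — are performed with the correct signs. Once that is done the proposition is immediate.
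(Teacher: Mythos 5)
Your proposal is correct and is exactly the route the paper takes: the paper states that Proposition~\ref{Fay} is simply Proposition~\ref{fay1} rewritten, and your expansion of $\De=\La-I$, cancellation of $\tau(n+1,t)/\tau(n+1,t)$, and collection of the coefficient $(1+s_1^{-1})-(1+s_3^{-1})=-(s_3^{-1}-s_1^{-1})$ is precisely that rewriting, carried out explicitly and with the correct signs. No gap.
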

Similar as \cite{Aratyn}, we can get the following property of the Baker-Akhiezer wave function and adjoint Baker-Akhiezer wave function.
\begin{proposition}\label{deltapro}
The following identity holds
\[\label{1}
\De(\Phi_{BA}(n,t+[\la^{-1}],\mu)\Psi_{BA}(n;t,\la))=\la\Psi_{BA}(n+1,t,\la)\Phi_{BA}(n,t,\mu),\]
\[\label{2}
\De(\Phi_{BA}(n,t,\mu)\Psi_{BA}(n;t-[\mu^{-1}],\la))=-\mu\Psi_{BA}(n+1,t,\la)\Phi_{BA}(n,t,\mu).\]
\end{proposition}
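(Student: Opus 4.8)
The plan is to prove identity \eqref{1} directly from the $\tau$-function Fay identity of Proposition \ref{Fay}, and then obtain \eqref{2} by the adjoint symmetry (the substitution $z\mapsto z^{-1}$, $n\mapsto -n$, $t\mapsto -t$ that swaps $\Phi_{BA}$ with $\Psi_{BA}$). First I would translate both sides of \eqref{1} into $\tau$-functions using the defining relations \eqref{taudefinition1} and \eqref{taudefinition2}. With $\mu$ the spectral parameter of $\Phi_{BA}$ and $\la$ that of $\Psi_{BA}$, the shift $t\mapsto t+[\la^{-1}]$ in $\Phi_{BA}(n,t+[\la^{-1}],\mu)$ produces a factor $\tau(n,t+[\la^{-1}]-[\mu^{-1}])$ over $\tau(n,t+[\la^{-1}])$ times the exponential/$(1+\mu)^n$ prefactors, while $\Psi_{BA}(n;t,\la)$ contributes $\tau(n,t+[\la^{-1}])/\tau(n,t)$ times $(1+\la)^{-n}$ and the opposite exponential. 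The product collapses: the $\tau(n,t+[\la^{-1}])$ factors cancel and the $\sum t_i\mu^i$ and $-\sum t_i\la^i$ exponentials combine to a $t$-independent factor times $(1+\mu)^n(1+\la)^{-n}$, leaving
\[
\Phi_{BA}(n,t+[\la^{-1}],\mu)\Psi_{BA}(n;t,\la)
=\frac{\tau(n,t+[\la^{-1}]-[\mu^{-1}])}{\tau(n,t)}\,(1+\mu)^n(1+\la)^{-n}\,e^{\sum t_i(\mu^i-\la^i)}.
\]

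Next I would apply $\De=\La-I$ to this expression. Since the prefactor $(1+\mu)^n(1+\la)^{-n}$ multiplies by $(1+\mu)/(1+\la)$ under $\La$ (acting on $n$), the left side of \eqref{1} becomes $(1+\mu)^n(1+\la)^{-n}e^{\sum t_i(\mu^i-\la^i)}$ times
\[
\frac{1+\mu}{1+\la}\cdot\frac{\tau(n+1,t+[\la^{-1}]-[\mu^{-1}])}{\tau(n+1,t)}
-\frac{\tau(n,t+[\la^{-1}]-[\mu^{-1}])}{\tau(n,t)}.
\]
The right side $\la\Psi_{BA}(n+1,t,\la)\Phi_{BA}(n,t,\mu)$, expanded the same way, equals $\la\,(1+\la)^{-n-1}(1+\mu)^n e^{\sum t_i(\mu^i-\la^i)}\cdot \tau(n+1,t+[\la^{-1}])\tau(n,t-[\mu^{-1}])/(\tau(n+1,t)\tau(n,t))$. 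Cancelling the common prefactor $(1+\mu)^n(1+\la)^{-n}e^{\sum t_i(\mu^i-\la^i)}$ from both sides, the claimed identity \eqref{1} reduces to a pure $\tau$-function relation. I would then identify this relation with Proposition \ref{Fay} under the substitution $s_1=\la$, $s_3=\mu$: the left-hand side of Proposition \ref{Fay}, namely $(1+s_3^{-1})\De\bigl(\tau(n,t+[s_1]-[s_3])/\tau(n,t)\bigr)$, matches (after multiplying through by $1+\mu$, noting $1+s_3^{-1}=(1+\mu)/\mu$) the $\De$ of the product computed above, and the right-hand side $(s_3^{-1}-s_1^{-1})\bigl(\tau(n,t-[s_3])\tau(n+1,t+[s_1])/(\tau(n,t)\tau(n+1,t)) - \tau(n,t+[s_1]-[s_3])/\tau(n,t)\bigr)$ matches $\la$ times the right side after the prefactor bookkeeping; the two ``$-\tau(n,t+[s_1]-[s_3])/\tau(n,t)$'' type terms on each side must be checked to cancel against the $I$-part of $\De$ on the left. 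This is just careful matching of rational prefactors in $\la,\mu$.

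The main obstacle I anticipate is precisely this prefactor bookkeeping: keeping track of the powers $(1+\mu)^{\pm n}$, $(1+\la)^{\pm n}$, the $\mu,\la$-dependent scalars $1+\mu^{-1}$, $s_3^{-1}-s_1^{-1}=\mu^{-1}-\la^{-1}$, and the shift $n\to n+1$ inside both $\tau$ and the prefactors simultaneously, so that the abstract Fay identity of Proposition \ref{Fay} lines up term-by-term with the wave-function identity. Everything else — the cancellation of exponentials, the collapse of the $\Phi_{BA}\Psi_{BA}$ product — is formal. For \eqref{2}, rather than repeating the computation I would note that $\Psi_{BA}(n;t-[\mu^{-1}],\la)\Phi_{BA}(n,t,\mu)$ collapses in the same manner to $\tau(n,t+[\la^{-1}]-[\mu^{-1}])/\tau(n,t)$ times prefactors $(1+\mu)^n(1+\la)^{-n}e^{\sum t_i(\mu^i-\la^i)}$ (the shift now sits in $\Psi_{BA}$), so applying $\De$ again lands on the same $\tau$-identity but with the roles of $[s_1]$ and $-[s_3]$ organized so that the overall constant is $-\mu$ instead of $\la$; alternatively one invokes the $*$-symmetry relating $\Phi_{BA}$ and $\Psi_{BA}$ directly. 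I would present \eqref{1} in full and remark that \eqref{2} follows \emph{mutatis mutandis}.
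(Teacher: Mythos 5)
Your overall route is the same as the paper's: rewrite the product of wave functions via \eqref{taudefinition1}--\eqref{taudefinition2}, apply $\De$, and match the result against the difference Fay identity of Proposition \ref{Fay}. The paper differs only cosmetically, in that it names the common tau-function quantity $X(n,\la,\mu)\tau(n,t)/\tau(n,t)$ (the vertex operator \eqref{Xdef}) so that the same computation can be reused later for the ASvM formula.

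However, the displayed formula in your proposal for the collapsed product drops a factor that is essential rather than mere bookkeeping. Shifting $t\mapsto t+[\la^{-1}]$ inside $\exp(\xi(t,\mu))$ produces the $t$-independent factor $\exp\left(\sum_{i\ge 1}\mu^{i}/(i\la^{i})\right)=(1-\mu/\la)^{-1}$, so the correct collapse is
\begin{equation*}
\Phi_{BA}(n,t+[\la^{-1}],\mu)\Psi_{BA}(n;t,\la)
=\frac{1}{1-\frac{\mu}{\la}}\,\frac{(1+\mu)^n}{(1+\la)^{n}}\,e^{\xi(t,\mu)-\xi(t,\la)}\,
\frac{\tau(n,t+[\la^{-1}]-[\mu^{-1}])}{\tau(n,t)},
\end{equation*}
while the product appearing in \eqref{2} collapses to the \emph{same} tau-function expression but with prefactor $(1-\la/\mu)^{-1}$ instead (the two being expansions in different regions, $|\mu|<|\la|$ versus $|\la|<|\mu|$, which is what the paper means by ``different convergence field''). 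These two rational constants are the only thing distinguishing the left-hand sides of \eqref{1} and \eqref{2}; since they are independent of $n$ and $t$ they pass through $\De$, and they are precisely what converts the single identity $\De(A)=(\la-\mu)\Psi_{BA}(n+1,t,\la)\Phi_{BA}(n,t,\mu)$ (with $A$ the common tau-function expression, i.e.\ \eqref{formula2}) into the constant $\la$ in \eqref{1} and $-\mu$ in \eqref{2}. As written, your formula would make the two left-hand sides identical and therefore could not yield two different right-hand sides. A smaller point: the substitution into Proposition \ref{Fay} should be $s_1=\la^{-1}$, $s_3=\mu^{-1}$ rather than $s_1=\la$, $s_3=\mu$, matching the paper's passage from \eqref{identiey} to \eqref{identiey2}. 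With these corrections the term-by-term matching you describe goes through and reproduces the paper's proof.
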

\begin{proof}

By Proposition \ref{Fay}, we can get the following  identity
\[\notag
&&\frac{(1+\mu)^{n+1}}{(1+\la)^{n+1}}e^{\xi(t,\mu)-\xi(t,\la)}\frac{1}{1-\frac{\mu}{\la}}\frac{\tau(n+1,t+[\la]-[\mu])}{\tau(n+1,t)})\\
\notag
&&
-\frac{(1+\mu)^n}{(1+\la)^{n}}e^{\xi(t,\mu)-\xi(t,\la)}\frac{1}{\frac{\la}{\mu}-1}\frac{\tau(n,t+[\la]-[\mu])}{\tau(n,t)}\\ \notag
&&=\frac{(1+\mu)^n}{(1+\la)^{n+1}}e^{\xi(t,\mu)-\xi(t,\la)}\frac{\tau(n,t-[\mu])\tau(n+1,t+[\la])}{\tau(n,t)\tau(n+1,t)}.
\]

Then it further leads to the following identity
\[\label{identiey}
&&\frac{1}{\la-\mu}\De(\frac{\la^n}{\mu^n}\frac{(1+\mu)^n}{(1+\la)^{n}}e^{\xi(t,\mu)-\xi(t,\la)}\frac{\tau(n,t+[\la]-[\mu])}{\tau(n,t)})\\ \notag
&=&\frac{\la^n}{\mu^{n+1}}\frac{(1+\mu)^n}{(1+\la)^{n+1}}e^{\xi(t,\mu)-\xi(t,\la)}\frac{\tau(n,t-[\mu])\tau(n+1,t+[\la])}{\tau(n,t)\tau(n+1,t)}.\]

The equation \eqref{identiey} above can have the following form if we change $\la$ and $\mu$ to  $\la^{-1}$ and $\mu^{-1}$
\[\label{identiey2}
&&\frac{1}{\la-\mu}\De(\frac{(1+\mu)^n}{(1+\la)^{n}}e^{\xi(t,\mu)-\xi(t,\la)}\frac{\tau(n,t+[\la^{-1}]-[\mu^{-1}])}{\tau(n,t)})\\ \notag
&=&\frac{(1+\mu)^n}{(1+\la)^{n+1}}e^{\xi(t,\mu)-\xi(t,\la)}\frac{\tau(n,t-[\mu^{-1}])\tau(n+1,t+[\la^{-1}])}{\tau(n,t)\tau(n+1,t)}.\]

Denoting
\[\label{Xdef}X(n,\la,\mu):=\frac{(1+\mu)^n}{(1+\la)^{n}}e^{\xi(t,\mu)-\xi(t,\la)}e^{\sum_{l=1}^{\infty}\frac{1}{l}(\la^{-l}-\mu^{-l})\frac{\d}{\d t_l}},\]
 we can get the following result
\[\frac{X(n,\la,\mu)\tau(n,t)}{\tau(n,t)}&=&\frac{(1+\mu)^n}{(1+\la)^{n}}e^{\xi(t,\mu)-\xi(t,\la)}\frac{\tau(n,t+[\la^{-1}]-[\mu^{-1}])}{\tau(n,t)}\\
\notag &=&(1-\frac{\mu}{\la})\Phi_{BA}(n,t+[\la^{-1}],\mu)\Psi_{BA}(n;t,\la).\]
Similarly, we can also get
\[\frac{X(n,\la,\mu)\tau(n,t)}{\tau(n,t)}
&=&(1-\frac{\la}{\mu})\Phi_{BA}(n,t,\mu)\Psi_{BA}(n;t-[\mu^{-1}],\la),\]
by considering another different convergence field.

 Therefore we further get

\[\label{formula2}
\frac{1}{\la-\mu}\De(\frac{X(n,\la,\mu)\tau(n,t)}{\tau(n,t)})=\Psi_{BA}(n+1,t,\la)\Phi_{BA}(n,t,\mu),\]
which further leads to eq.\eqref{1} in the proposition.
In fact the equation \eqref{formula2} can also be obtained from formula (16) in \cite{WilloxJMP}, which itself is obtained from the general formula (4) by the discretization procedure explained in \cite{WilloxJMP}.

In the same way, eq.\eqref{2} can be easily proved.

\end{proof}

Besides the above proposition, the following lemma can also be got easily whose continuous version can be found in \cite{Aratyn}.
\begin{lemma}\label{Delataz}
The following identity holds
\[\notag
\frac{1}{\mu}\hat\De_z(\Phi_{BA}(n,t+[\mu^{-1}],\la)\Psi_{BA}(n;t,\mu))=\frac1z\Phi_{BA}(n,t,\la)\Psi_{BA}(n;t-[z^{-1}],\mu)
\]
where
\[\hat\De_zf(t)=f(t-[z^{-1}])-f(t).\]
\end{lemma}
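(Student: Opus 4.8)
The plan is to rewrite the identity purely in terms of tau functions and then recognize it as the three-term Fay identity already recalled above (the one obtained by putting $s_0=0$ in the Fay identity). First I would substitute \eqref{taudefinition1}--\eqref{taudefinition2} into each of the four Baker--Akhiezer factors, using $e^{\xi(t+[a],w)}=(1-aw)^{-1}e^{\xi(t,w)}$. On the left-hand side the factor $\tau(n,t+[\mu^{-1}])$ produced by $\Phi_{BA}(n,t+[\mu^{-1}],\la)$ cancels the one coming from $\Psi_{BA}(n;t,\mu)$, and one obtains
\begin{align*}
\Phi_{BA}(n,t+[\mu^{-1}],\la)\,\Psi_{BA}(n;t,\mu)=\frac{1}{1-\la/\mu}\left(\frac{1+\la}{1+\mu}\right)^{n}e^{\xi(t,\la)-\xi(t,\mu)}\,\frac{\tau(n,t+[\mu^{-1}]-[\la^{-1}])}{\tau(n,t)}
\end{align*}
(this is, up to the scalar $(1-\la/\mu)^{-1}$, the vertex--operator expression $X(n,\mu,\la)\tau(n,t)/\tau(n,t)$ of the proof of Proposition~\ref{deltapro}). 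In the same way,
\begin{align*}
\Phi_{BA}(n,t,\la)\,\Psi_{BA}(n;t-[z^{-1}],\mu)=\frac{1}{1-\mu/z}\left(\frac{1+\la}{1+\mu}\right)^{n}e^{\xi(t,\la)-\xi(t,\mu)}\,\frac{\tau(n,t-[\la^{-1}])\,\tau(n,t-[z^{-1}]+[\mu^{-1}])}{\tau(n,t)\,\tau(n,t-[z^{-1}])}.
\end{align*}

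Next I would apply $\hat\De_z f(t)=f(t-[z^{-1}])-f(t)$ to the first expression, following the change of the prefactor through $e^{-\xi([z^{-1}],\la)}=1-\la/z$ and $e^{\xi([z^{-1}],\mu)}=(1-\mu/z)^{-1}$ (only the tau--quotient and the $e^{\xi(t,\la)-\xi(t,\mu)}$ factor feel the shift $t\mapsto t-[z^{-1}]$). Equating the outcome with $z^{-1}$ times the second expression, cancelling the common prefactor $\left((1+\la)/(1+\mu)\right)^{n}e^{\xi(t,\la)-\xi(t,\mu)}$ and clearing all denominators, the whole statement collapses to the purely bilinear identity
\begin{align*}
&(z-\la)\,\tau(n,t+[\mu^{-1}]-[\la^{-1}]-[z^{-1}])\,\tau(n,t)-(z-\mu)\,\tau(n,t-[z^{-1}])\,\tau(n,t+[\mu^{-1}]-[\la^{-1}])\\
&\qquad=(\mu-\la)\,\tau(n,t-[\la^{-1}])\,\tau(n,t+[\mu^{-1}]-[z^{-1}]).
\end{align*}
This is precisely the three-term Fay identity recalled above, evaluated at $s_1=\mu^{-1}$, $s_2=\la^{-1}$, $s_3=z^{-1}$: indeed $s_2^{-1}-s_3^{-1}=\la-z$, $s_1^{-1}-s_2^{-1}=\mu-\la$, $s_3^{-1}-s_1^{-1}=z-\mu$, and the six tau arguments match term by term, which would finish the proof.

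The argument is entirely bookkeeping, so no single step is a genuine obstacle; the one thing needing care is keeping the various scalar prefactors $e^{\pm\xi([z^{-1}],\cdot)}$, $e^{\xi([\mu^{-1}],\cdot)}$ and $(1+\cdot)^{n}$ straight and cancelling them without sign errors. As with all of these tau--function manipulations, the relations are identities between formal expansions in $\la,\mu,z$, so one must read them in one consistent expansion domain — the same remark as ``another different convergence field'' in the proof of Proposition~\ref{deltapro}.
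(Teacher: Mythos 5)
Your proposal is correct and follows essentially the same route as the paper: substitute the tau-function expressions \eqref{taudefinition1}--\eqref{taudefinition2}, apply $\hat\De_z$ while tracking the prefactors $(1-\la/z)$, $(1-\mu/z)^{-1}$, and reduce to the three-term Fay identity at $s_1=\mu^{-1}$, $s_2=\la^{-1}$, $s_3=z^{-1}$. The only difference is that you make the Fay-identity step explicit, whereas the paper performs it silently in passing from the two-term $\hat\De_z$ expression to the single product of tau functions.
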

\begin{proof}
According eq.\eqref{taudefinition1} and eq.\eqref{taudefinition2}, we can get the following identity
\[\notag
\Phi_{BA}(n,t+[\mu^{-1}],\la)\Psi_{BA}(n;t,\mu)=\frac{(1+\la)^n}{(1+\mu)^n}\frac{1}{1-\frac{\la}{\mu}}
e^{\xi(t,\la)-\xi(t,\mu)}\frac{\tau(n,t+[\mu^{-1}]-[\la^{-1}])}{\tau(n,t)}.
\]
 Then we can derive the following calculation which can finish the proof,

\begin{eqnarray*}
&&\frac{1}{\mu}\hat\De_z(\Phi_{BA}(n,t+[\mu^{-1}],\la)\Psi_{BA}(n;t,\mu))\\
&=&\frac{(1+\la)^n}{(1+\mu)^n}\frac{1}{\mu-\la}\frac{1-\frac{\la}{z}}
{1-\frac{\mu}{z}}
e^{\xi(t,\la)-\xi(t,\mu)}\frac{\tau(n,t+[\mu^{-1}]-[\la^{-1}]-[z^{-1}])}{\tau(n,t-[z^{-1}])}\\
&&-\frac{(1+\la)^n}{(1+\mu)^n}\frac{1}{\mu-\la}
e^{\xi(t,\la)-\xi(t,\mu)}\frac{\tau(n,t+[\mu^{-1}]-[\la^{-1}])}{\tau(n,t)}\\
&=&\frac{(1+\la)^n}{(1+\mu)^n}\frac{1}{(z-\mu)}
e^{\xi(t,\la)-\xi(t,\mu)}\frac{\tau(n,t-[\la^{-1}])\tau(n,t+[\mu^{-1}]-[z^{-1}])}{\tau(n,t)\tau(n,t-[z^{-1}])}\\
&=&\frac1z\Phi_{BA}(n,t,\la)\Psi_{BA}(n;t-[z^{-1}],\mu).
\end{eqnarray*}

\end{proof}

We need to remark that a few of the formulas in Proposition \ref{deltapro} and Lemma \ref{Delataz}  are in fact special cases of the general formulas that are obtained in \cite{Willox}, which deals with the fermonic representation of the KP hierarchy.

In the following part, it is time to introduce the spectral representation of eigenfunctions of the discrete KP hierarchy which will be shown in the following proposition.
From now on, we  denote above residue $res_z$ in eq.\eqref{hirotabi} as integral $\int$ in the following part.
\begin{proposition}\label{spectral}
The eigenfunction $\phi(n;t)$ and adjoint eigenfunction  $\psi(n;t)$ have the following spectral representation using the Baker-Akhiezer function $\Phi_{BA}(n;t,z)$ and adjoint Baker-Akhiezer function $\Psi_{BA}(n;t,z)$
\[\label{spectral}&&\phi(n,t)=-\int dz \bar \phi(n,z)(\Phi_{BA}(n;t,z)),\ \ \\
\label{spectral2}&&\psi(n;t)=\int dz \bar \psi(n,z)\Psi_{BA}(n+1;t,z),\ \
\]
where two density functions
$\bar \phi(n,z)$ and $\bar \psi(n,z)$ have the following spectral representation
\[\notag \bar \phi(n,z)=S(\phi(n,t'),\Psi_{BA}(n+1;t',z)),\ \ \ \bar \psi(n,z)=S(\psi(n,t'),(\Phi_{BA}(n+1;t',z))).\]

\end{proposition}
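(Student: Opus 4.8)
The plan is to follow the scheme of \cite{Aratyn} for the continuous KP hierarchy, the new ingredient being the bookkeeping of the shift $\La$. Two structural remarks make the statement meaningful. First, $\Phi_{BA}(n;\cdot,z)$ is itself an eigenfunction of the discrete KP hierarchy, since $\d_{t_j}\Phi_{BA}=B_j(\Phi_{BA})$, and $\Psi_{BA}(n+1;\cdot,z)$ is an adjoint eigenfunction, since $\d_{t_j}\Psi_{BA}(n)=-B_j^*(n-1)\Psi_{BA}(n)$ yields $\d_{t_j}\Psi_{BA}(n+1)=-B_j^*(n)\Psi_{BA}(n+1)$. Hence the potentials $S(\phi(n,t'),\Psi_{BA}(n+1;t',z))$ and $S(\psi(n,t'),\Phi_{BA}(n+1;t',z))$ exist, so $\bar\phi,\bar\psi$ are well defined, and reading $\Psi_{BA}(\cdot+1;t',z)$, $\Phi_{BA}(\cdot+1;t',z)$ as functions of the lattice variable, the defining relation for $S$ gives $\De_n\bar\phi(n,z)=\phi(n,t')\Psi_{BA}(n+1;t',z)$ and $\De_n\bar\psi(n,z)=\psi(n,t')\Phi_{BA}(n+1;t',z)$. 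The proof then has two steps: (a) the identity on the diagonal $t'=t$, and (b) $t'$-independence of the two right-hand sides.

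For (a) with $\phi$, put $R(n):=-\int dz\,\bar\phi(n,z)\Phi_{BA}(n;t,z)$ with $t'=t$ and apply $\De_n$. Using $\De_n(fg)=(\De_nf)\,g(n)+f(n+1)\,(\De_ng)$ together with $\De_n\bar\phi(n,z)=\phi(n,t)\Psi_{BA}(n+1;t,z)$ and $\De_n\Phi_{BA}(n;t,z)=\Phi_{BA}(n+1;t,z)-\Phi_{BA}(n;t,z)$, the quantity $\De_nR(n)$ becomes a sum of three residue integrals. The decisive one is $-\phi(n,t)\int dz\,\Psi_{BA}(n+1;t,z)\Phi_{BA}(n;t,z)$: by \eqref{PhiBA}--\eqref{PhiBA2} the product $\Psi_{BA}(n+1;t,z)\Phi_{BA}(n;t,z)$ equals $(1+z)^{-1}$ times a power series $1+O(z^{-1})$, so its residue is $1$ and this term equals $-\phi(n,t)$. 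The second integral is exactly $R(n+1)$, and the integrand of the third starts at order $z^{-2}$, hence has residue $0$. Thus $\De_nR(n)=-\phi(n,t)+R(n+1)$, which with $\De_nR(n)=R(n+1)-R(n)$ forces $R(n)=\phi(n,t)$.

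For (a) with $\psi$ the scheme is identical but the shift bookkeeping is genuinely delicate, and this is the step I expect to be hardest. One feeds in eq. \eqref{2} of Proposition \ref{deltapro} to control $S(\psi(n,t),\Phi_{BA}(n+1;t,z))$, and uses the bilinear identity \eqref{hirotabi} at $j=0$, which gives $\int dz\,\Phi_{BA}(m;t,z)\Psi_{BA}(m;t,z)=0$, together with $\int dz\,\Phi_{BA}(m+1;t,z)\Psi_{BA}(m+2;t,z)=1$; the telescoping then closes, but only after the indices $n$, $n+1$, $n+2$ and the powers of $\La$ have been placed correctly at each step. The asymmetry with the $\phi$-case --- the outer factor being $\Phi_{BA}(n;t,z)$ in the first formula but $\Psi_{BA}(n+1;t,z)$ in the second --- traces back to the shift $(W^{-1}(n-1))^*$ in the definition \eqref{PhiBA2} of $\Psi_{BA}$, and this is precisely the place where the discrete KP hierarchy departs from the KP hierarchy. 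Alternatively one can deduce the $\psi$-identity from the $\phi$-identity via the hierarchy involution that exchanges $\De\leftrightarrow\De^*$ and $\Phi_{BA}\leftrightarrow\Psi_{BA}$ and reshuffles the lattice index.

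For step (b), differentiate $-\int dz\,S(\phi(n,t'),\Psi_{BA}(n+1;t',z))\Phi_{BA}(n;t,z)$ in $t'_k$. Since $\phi$ occupies the first slot of $S$, the $t'_k$-flow of this potential is again of the form $res(\De^{-1}(\cdot)B_k(\cdot)\De^{-1})$ as in \eqref{Stn}, now built from $\Psi_{BA}(n+1;t',z)$, $B_k$ and $\phi$; substituting and exchanging $\int dz$ with the operator residue leaves $\int dz\,\Psi_{BA}(n+1;t',z)\Phi_{BA}(n;t,z)$ inside, which vanishes by \eqref{hirotabi} exactly as in \cite{Aratyn}, so the $t'_k$-derivative is zero; the adjoint case is symmetric. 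As an independent check one can note that every eigenfunction of the discrete KP hierarchy is a superposition $\phi(n,t)=\int d\mu\,\rho(\mu)\Phi_{BA}(n;t,\mu)$ --- because $W^{-1}(\phi)$ obeys the free flow $\d_{t_k}(W^{-1}\phi)=\De^k(W^{-1}\phi)$ --- and likewise every adjoint eigenfunction is a superposition of the $\Psi_{BA}(n+1;t,\mu)$, so by linearity the whole proposition reduces to the Baker--Akhiezer case, where Proposition \ref{deltapro} supplies the densities in closed form and the residue collapse is immediate.
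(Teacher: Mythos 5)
Your proposal is correct and, at its core, follows the same route as the paper: show that the right-hand sides are independent of $t'$ by differentiating under the integral and appealing to the bilinear identity \eqref{hirotabi} (your step (b) is the paper's first step almost verbatim), and then evaluate on the diagonal $t'=t$ by a residue computation. The only genuine difference is how the diagonal evaluation is carried out. The paper does it in one line: at $t'=t$ the integrand $\Phi_{BA}(n;t,z)\,S(\phi(n,t),\Psi_{BA}(n+1;t,z))$ expands as $-\phi(n,t)\,(z^{-1}+o(z^{-2}))$, so the $z$-residue gives $\phi(n,t)$ directly. You instead apply $\De_n$ to $R(n)=-\int dz\,\bar\phi(n,z)\Phi_{BA}(n;t,z)$, feed in the defining relation $S_{\De}=\psi\phi$ together with $res_z\,\Psi_{BA}(n+1;t,z)\Phi_{BA}(n;t,z)=1$, and solve the resulting difference relation $R(n+1)-R(n)=-\phi(n,t)+R(n+1)$. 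This is valid, and slightly more economical in that it needs only the leading \emph{order}, not the leading coefficient, of the $z$-expansion of $\bar\phi(n+1,z)$; but the vanishing of your third residue still rests on exactly the asymptotics the paper invokes, so nothing essential is gained. Your closing superposition argument and the $\De\leftrightarrow\De^{*}$ involution are additions not present in the paper, and your hedging on the $\psi$-case parallels the paper, which simply omits that case as ``quite similar.'' One caution that applies to your step (b) just as much as to the paper's own proof: the claim that integrals of the form $\int dz\,\Phi_{BA}(n;t,z)\Psi_{BA}(n+1;t',z)$ vanish by \eqref{hirotabi} must be read with care, since at $t'=t$ that very pairing has residue $1$, not $0$; what actually has to be checked is that the $t'_k$-derivative of the squared-eigenfunction potential produces only index pairings covered by the bilinear identity, and neither your sketch nor the paper spells out that bookkeeping.
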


\begin{proof}
The eq.\eqref{hirotabi} can be rewritten as
\[\label{bilinear}\int dz (\Phi_{BA}(m;t,z))\Psi_{BA}(n;t',z))=0, \ \ m,n\in \Z_+.\]
In this proof, we choose $n=m+1.$
Deriving the right side of the eq.\eqref{spectral} and  using eq.\eqref{hirotabi}, we can find
\[&&\d_{t'_{m}}(\int dz (\Phi_{BA}(n;t,z))S(\phi(n,t'),\Psi_{BA}(n+1;t',z)))=0.\]
That means the right side of eq.\eqref{spectral} do not depend on $t'$.
Set $t'=t$, then the right side of eq.\eqref{spectral} becomes
\[\notag -\int dz (\Phi_{BA}(n;t,z))S(\phi(n,t),\Psi_{BA}(n+1;t,z))=\int dz \phi(n,t)(z^{-1}+o(z^{-2}))=\phi(n,t).\]
So the eq.\eqref{spectral} is proved.
To prove eq.\eqref{spectral2}, we need to choose $m=n$ in eq.\eqref{bilinear}. The process is quite similar, therefore we omit it here.
\end{proof}

From above spectral representation, one can see the difference between discrete KP hierarchy and KP hierarchy which shows the discrete effect.

The spectral representation will help us to get the ghost flow of eigenfunction $\phi(n;t)$ and adjoint eigenfunction  $\psi(n;t)$.
Considering the Proposition \ref{deltapro} and above spectral representation, this will lead to the following proposition.
\begin{proposition}
The following identities hold
\[\label{lamu}
\ \ \ \ \ \ \ \ \ \ \ S(\Psi_{BA}(n+1,t,\la),\Phi_{BA}(n,t,\mu))&=&\la^{-1}\Phi_{BA}(n,t+[\la^{-1}],\mu)\Psi_{BA}(n;t,\la),\\
\ \  \ \ \ \ \ \ \ \ \ \label{mula}S(\Psi_{BA}(n+1,t,\la),\Phi_{BA}(n,t,\mu))&=&-\mu^{-1}\Phi_{BA}(n,t,\mu)\Psi_{BA}(n;t-[\mu^{-1}],\la),\\
\label{Hphi}S(\Psi_{BA}(n+1,t,\la),\phi(n,t))&=&\la^{-1}\phi(n,t+[\la^{-1}])\Psi_{BA}(n;t,\la),\\
\label{Hpsi}S(\psi(n,t),\Phi_{BA}(n,t,\la))
&=&-\la^{-1}\Phi_{BA}(n,t,\la)\psi(n-1;t-[\la^{-1}]),\]
\[S(\psi(n,t),\phi(n,t))
&=&\int\int d\la d\mu \bar \phi(\mu)\bar \psi(\la)S(\Psi_{BA}(n+1,t,\la),\Phi_{BA}(n,t,\mu)).
\]
\end{proposition}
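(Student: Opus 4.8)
The plan is to prove the five identities in turn, the first two being the essential ones and the remaining three following by the spectral representation of Proposition \ref{spectral}.

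\textbf{Identities \eqref{lamu} and \eqref{mula}.} First I would recall that, by the defining property \eqref{SDelta} of the $S$ function, $S(\Psi_{BA}(n+1,t,\la),\Phi_{BA}(n,t,\mu))$ is characterized (up to the normalization fixed by its expansion in powers of the spectral parameters) by the requirement that applying $\De$ to it returns $\Psi_{BA}(n+1,t,\la)\Phi_{BA}(n,t,\mu)$. Now look at Proposition \ref{deltapro}: equation \eqref{1} says precisely that $\De$ applied to $\la^{-1}\Phi_{BA}(n,t+[\la^{-1}],\mu)\Psi_{BA}(n;t,\la)$ equals $\Psi_{BA}(n+1,t,\la)\Phi_{BA}(n,t,\mu)$, and equation \eqref{2} says the same for $-\mu^{-1}\Phi_{BA}(n,t,\mu)\Psi_{BA}(n;t-[\mu^{-1}],\la)$. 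Hence both right-hand sides are antiderivatives (with respect to $\De$) of the same quantity, so they can differ from $S(\Psi_{BA}(n+1,t,\la),\Phi_{BA}(n,t,\mu))$ only by a discrete constant, i.e.\ a function annihilated by $\De$; checking the leading asymptotics as $\la,\mu\to\infty$ (using the explicit form \eqref{PhiBA}, \eqref{PhiBA2} of the wave functions, each normalized to $1+O(\cdot)$) pins the constant to zero. This gives \eqref{lamu} and \eqref{mula} simultaneously; note that their equality is exactly the content of the identity displayed inside the proof of Proposition \ref{deltapro} relating the two convergence regimes of $X(n,\la,\mu)\tau/\tau$.

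\textbf{Identities \eqref{Hphi} and \eqref{Hpsi}.} These I would obtain by inserting the spectral representations \eqref{spectral}, \eqref{spectral2} into \eqref{lamu}, \eqref{mula} and using bilinearity of $S$ in its arguments (which is clear from the construction of $S$, since $S(\psi,\phi)_\De=\psi\phi$ is bilinear and the flow equation \eqref{Stn} is as well). Concretely, for \eqref{Hphi} write $\phi(n,t)=-\int d\mu\,\bar\phi(n,\mu)\Phi_{BA}(n;t,\mu)$, pull the integral through the second slot of $S$, apply \eqref{lamu} to each $S(\Psi_{BA}(n+1,t,\la),\Phi_{BA}(n,t,\mu))$, and recognize the resulting $\mu$-integral $-\int d\mu\,\bar\phi(n,\mu)\Phi_{BA}(n,t+[\la^{-1}],\mu)$ as $\phi(n,t+[\la^{-1}])$ by the spectral representation again (with $t$ replaced by $t+[\la^{-1}]$); the scalar factor $\Psi_{BA}(n;t,\la)$ comes out of the integral. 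For \eqref{Hpsi} the same manipulation with \eqref{mula} and \eqref{spectral2} produces $\psi$ evaluated at a shifted time; one must be careful with the discrete shift $n\mapsto n-1$, which is already visible in the density $\bar\psi(n,z)=S(\psi(n,t'),\Phi_{BA}(n+1;t',z))$ and in the appearance of $\Psi_{BA}(n+1;t,z)$ rather than $\Psi_{BA}(n;t,z)$ in \eqref{spectral2}.

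\textbf{The last identity.} This is the most direct: substitute both spectral representations $\phi(n,t)=-\int d\mu\,\bar\phi(\mu)\Phi_{BA}(n;t,\mu)$ and $\psi(n,t)=\int d\la\,\bar\psi(\la)\Psi_{BA}(n+1;t,\la)$ into $S(\psi(n,t),\phi(n,t))$ and use bilinearity of $S$ to move both integrals outside; the two signs combine to give the stated $+\int\int d\la\,d\mu\,\bar\phi(\mu)\bar\psi(\la)\,S(\Psi_{BA}(n+1,t,\la),\Phi_{BA}(n,t,\mu))$. The main obstacle is not any single computation but the bookkeeping: one has to justify interchanging the spectral integrals with the (formal) operator $\De^{-1}$ implicit in the definition of $S$, and to keep the discrete index shifts ($n$ versus $n+1$ versus $n-1$) and the time shifts by $[\la^{-1}]$ consistent throughout — exactly the places where the discrete theory departs from the continuous KP case of \cite{Aratyn}. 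I would handle the interchange at the level of formal power series in the spectral parameters, where $S$ acts coefficient-wise, so that bilinearity and term-by-term application of \eqref{lamu}–\eqref{mula} are unambiguous.
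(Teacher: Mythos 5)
Your proposal is correct and follows exactly the route the paper intends: the paper offers no written proof of this proposition, only the remark that it follows from Proposition \ref{deltapro} together with the spectral representation, and your argument (matching $\De$-antiderivatives from \eqref{1}--\eqref{2} with the defining property \eqref{SDelta} of $S$, fixing the integration constant by asymptotics, then inserting the spectral representations and using bilinearity of $S$) is precisely the fleshed-out version of that remark. Your explicit flagging of the $n\mapsto n-1$ bookkeeping in \eqref{Hpsi} is a point the paper glosses over entirely, so if anything your write-up is more careful than the source.
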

We need note that eq.\eqref{lamu} and eq.\eqref{mula} are defined in two different field.

After the proposition above about spectral representation of the discrete KP hierarchy, it is easy to lead to the ghost flow of the eigenfunction $\phi(n;t)$ and adjoint eigenfunction  $\psi(n;t)$ as the following proposition.

\begin{proposition}\label{flowoneigenfunction}
The eigenfunction $\phi(n;t)$ and adjoint eigenfunction  $\psi(n;t)$ satisfy the following equation
\[\d_Z\phi(n;t)&=&\phi S(\psi,\phi(n;t)),\\
\d_Z\psi(n;t)&=&\psi S(\La(\phi),\psi(n;t)).
\]
\end{proposition}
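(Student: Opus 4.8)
The plan is to differentiate the spectral representations of Proposition~\ref{spectral} with respect to the ghost time $Z$ and then to substitute the ghost flows of the Baker--Akhiezer functions supplied by Proposition~\ref{flowonBA}. The only structural ingredient needed besides these is that, for a fixed adjoint eigenfunction $\psi$, the map $\phi'\mapsto S(\psi,\phi')$ is linear in $\phi'$; this is immediate from the defining relations $S(\psi,\phi')_{\De}=\psi\phi'$ and $S(\psi,\phi')_{t_n}=res(\De^{-1}\psi B_n\phi'\De^{-1})$, each of which is linear in $\phi'$. The analogous remark applies to $S(\La(\phi),\cdot)$ in the adjoint setting.

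Granting for the moment that the spectral densities $\bar\phi(n,z)$ and $\bar\psi(n,z)$ are annihilated by $\d_Z$, the computation for $\phi$ runs as follows. Starting from $\phi(n,t)=-\int dz\,\bar\phi(n,z)\Phi_{BA}(n;t,z)$ and using Proposition~\ref{flowonBA},
\begin{equation}
\d_Z\phi(n,t)=-\int dz\,\bar\phi(n,z)\,\d_Z\Phi_{BA}(n;t,z)=-\phi\int dz\,\bar\phi(n,z)\,S(\psi,\Phi_{BA}(n;t,z)),
\end{equation}
where the second equality uses that $\phi=\phi(n,t)$ does not depend on $z$. Since $S(\psi,\cdot)$ is linear it can be pulled through the $z$-integral, so the right-hand side equals $-\phi\,S\bigl(\psi,\int dz\,\bar\phi(n,z)\Phi_{BA}(n;t,z)\bigr)=-\phi\,S(\psi,-\phi(n,t))=\phi\,S(\psi,\phi(n,t))$ by the spectral representation, which is the first identity. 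The second identity is obtained in exactly the same manner from $\psi(n;t)=\int dz\,\bar\psi(n,z)\Psi_{BA}(n+1;t,z)$, the formula $\d_Z\Psi_{BA}(n+1;t,z)=\psi\,S(\La(\phi),\Psi_{BA}(n+1;t,z))$ of Proposition~\ref{flowonBA}, and the linearity of $S(\La(\phi),\cdot)$.

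The remaining, and principal, point is to verify that $\d_Z\bar\phi(n,z)=0$ and $\d_Z\bar\psi(n,z)=0$, i.e.\ that the density functions, regarded as functions of $(n,z)$ only, carry no dependence on the ghost parameter. Here I would imitate the argument from the proof of Proposition~\ref{spectral}: writing $\bar\phi(n,z)=S(\phi(n,t'),\Psi_{BA}(n+1;t',z))$ with $t'$ an auxiliary time, one differentiates the combination $\int dz\,\Phi_{BA}(n;t,z)\,S(\phi(n,t'),\Psi_{BA}(n+1;t',z))$ along the ghost flow acting on the primed variables, and invokes Proposition~\ref{flowonBA} for that flow together with the Hirota bilinear identity \eqref{hirotabi} to see that the derivative vanishes; hence $\bar\phi$ is independent of the ghost time. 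The same computation with the roles of $\Phi_{BA}$ and $\Psi_{BA}$ interchanged handles $\bar\psi$. I expect this invariance --- together with the attendant justification for interchanging $\d_Z$ with the spectral residue $\int dz$, which should be carried out term by term in the asymptotic expansion in $z^{-1}$ --- to be the only delicate step; once it is in place the two formulas follow from the displayed manipulation above.
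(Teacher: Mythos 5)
Your proposal follows essentially the same route as the paper: differentiate the spectral representations of Proposition \ref{spectral}, substitute the ghost flows of the Baker--Akhiezer functions from Proposition \ref{flowonBA}, and pull the operator $\phi\De^{-1}\psi$ (equivalently, $S(\psi,\cdot)$ by linearity) through the spectral integral to recover $\phi\,S(\psi,\phi)$ and $\psi\,S(\La(\phi),\psi)$. The only difference is that you explicitly flag and sketch the invariance $\d_Z\bar\phi=\d_Z\bar\psi=0$ of the density functions, which the paper's proof simply takes for granted; this is a point in your favour rather than a divergence of method.
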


\begin{proof}
A direct calculation will lead to the following identities
\begin{align}\notag
\d_Z\phi(n;t)&=\int dz \bar \phi(z)\d_Z\Phi_{BA}(n;t,z)\\ \notag
        &=\phi \De^{-1}\psi\int dz \bar \phi(z)\Phi_{BA}(n;t,z)\\ \notag
        &=\phi S(\psi,\phi(n;t)),
\end{align}
\begin{align}\notag
\d_Z\psi(n;t)&=\int dz \bar \psi(z)\d_Z\Psi_{BA}(n+1;t,z)\\ \notag
&=\psi \La\De^{-1}\phi\int dz \bar \psi(z)\Psi_{BA}(n+1;t,z)\\ \notag
&=\psi S(\La(\phi),\psi(n;t)).
\end{align}
Then we finish the proof.
\end{proof}

\begin{proposition}
The following identity holds
\[\label{deltazH}
\hat\De_z(S(\phi(n,t),\psi(n;t))=-\frac{1}z\phi(n,t)\psi(n-1;t-[z^{-1}])
\]

where
\[\hat\De_zf(t)=f(t-[z^{-1}])-f(t).\]
\end{proposition}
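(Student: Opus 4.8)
The plan is to derive this identity by combining the spectral representation of the eigenfunction $\phi$ from Proposition \ref{spectral} with the integral operator $\hat\De_z$ term by term, exactly as in the proof of Proposition \ref{flowoneigenfunction}. First I would write, using eq.\eqref{spectral}, that
\[
S(\phi(n,t),\psi(n;t))=\int d\la\,\bar\phi(n,\la)\,S(\Psi_{BA}(n+1,t,\la),\phi(n;t))\cdot(-1)
\]
after inserting the spectral integral for the first slot of $S$; more precisely one expands $\phi(n,t)=-\int d\la\,\bar\phi(n,\la)\Phi_{BA}(n;t,\la)$ inside $S(\cdot,\psi)$ and uses bilinearity of $S$ in its arguments together with eq.\eqref{Hphi}, which already rewrites $S(\Psi_{BA}(n+1,t,\la),\phi(n,t))$ as $\la^{-1}\phi(n,t+[\la^{-1}])\Psi_{BA}(n;t,\la)$. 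So the object $S(\phi(n,t),\psi(n;t))$ becomes an integral over $\la$ of $\bar\psi$-weighted products of $\phi(n,t+[\la^{-1}])$ and $\Psi_{BA}$; this is the representation I want to hit with $\hat\De_z$.

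Next I would apply $\hat\De_z$ and move it inside the $\la$-integral (it acts only on the $t$-dependence, not on the spectral parameter). The key input here is Lemma \ref{Delataz}, which states
\[
\frac1\mu\hat\De_z\bigl(\Phi_{BA}(n,t+[\mu^{-1}],\la)\Psi_{BA}(n;t,\mu)\bigr)=\frac1z\Phi_{BA}(n,t,\la)\Psi_{BA}(n;t-[z^{-1}],\mu),
\]
the adjoint analogue of which governs precisely the product $\phi(n,t+[\la^{-1}])\Psi_{BA}(n;t,\la)$ once $\phi$ is restored to its spectral form. Applying $\hat\De_z$ thus collapses the shifted arguments: the left-hand-side product, after integrating against the density and using eq.\eqref{spectral} again (now for $\phi$ evaluated at the shifted time $t-[z^{-1}]$), produces $-\frac1z\phi(n,t)\psi(n-1;t-[z^{-1}])$. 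The shift $n\mapsto n-1$ on $\psi$ and the shift $t\mapsto t-[z^{-1}]$ appear exactly as in eq.\eqref{Hpsi} and Lemma \ref{Delataz}, so the bookkeeping of discrete indices should close.

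The main obstacle I anticipate is the careful tracking of the two different convergence regions (the excerpt repeatedly warns that identities like eq.\eqref{lamu} and eq.\eqref{mula} hold in different domains), and making sure that when $\hat\De_z$ is applied the relevant product is expanded in the series region where Lemma \ref{Delataz} is valid rather than its alternative form. A secondary subtlety is verifying that $\hat\De_z$ genuinely commutes with the spectral integration $\int d\la$ — i.e.\ that $\bar\phi(n,\la)$ carries no $z$-dependence and that the integrand stays in the class where the residue pairing of eq.\eqref{hirotabi} applies — but this is the same interchange already used tacitly in Proposition \ref{spectral} and Proposition \ref{flowoneigenfunction}, so I would simply invoke it. Once these domain issues are handled, the computation is a direct substitution chain: spectral representation of $\phi$, then eq.\eqref{Hphi}, then $\hat\De_z$ via Lemma \ref{Delataz}, then spectral representation of $\phi$ again to recognize the result as $-\frac1z\phi(n,t)\psi(n-1;t-[z^{-1}])$.
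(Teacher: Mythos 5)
Your proposal follows essentially the same route as the paper: reduce $S(\phi,\psi)$ to $S$ of Baker--Akhiezer wave functions via the spectral representation, apply $\hat\De_z$ at the wave-function level using Lemma \ref{Delataz} together with eqs.~\eqref{lamu}, \eqref{mula} (equivalently \eqref{Hphi}, \eqref{Hpsi}), and then integrate against the spectral densities to recover the statement for eigenfunctions. Apart from a slot mix-up in your first display (expanding the first argument $\phi$ should yield $S(\Phi_{BA}(n;t,\la),\psi(n;t))$ weighted by $\bar\phi$, not $S(\Psi_{BA}(n+1,t,\la),\phi(n;t))$), this is the paper's argument written out in more detail.
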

\begin{proof}

Considering the Lemma \ref{Delataz} and eq.\eqref{lamu}, eq.\eqref{mula} can lead to equation as

\[\label{BAdeltaz}
\hat\De_z(S(\Phi_{BA}(n,t,\la),\Psi_{BA}(n+1;t,\mu)))=-\frac{1}z\Phi_{BA}(n,t,\la)\Psi_{BA}(n;t-[z^{-1}],\mu).
\]
Then using the spectral representation of the eigenfunction and adjoint eigenfunction, we can derive the proposition directly.

\end{proof}

\sectionnew{Ghost flows on the tau function and ASvM formula}
 After the good preparation in the last section, this section will be devoted to derive the ghost flow on the tau function and a new proof of the ASvM formula.

All ghost flows above is about \[\d_Z :=\phi \De^{-1}\psi\]
where the choices of  a pair $(\phi,\psi)$ are in the set $(E,E')$ of the eigenfunctions and adjoint eigenfunctions of discrete KP hierarchy. Also $\phi,\psi$ should correspond to the same $B_n(j)$ in the definition of the eigenfunctions and adjoint eigenfunctions of discrete KP hierarchy.  That means the ghost symmetry of discrete KP hierarchy can be generalized to
\[\d_Z :=\sum_{\phi\in E,\psi \in E'}\phi \De^{-1}\psi.\]
Now we will think about one specific ghost flow denoted as
\[\d_{Zn}:=\phi(n) \De^{-1}\psi(n).\]

 After above preparation, it is time to derive the ghost flow acting on the tau function of the discrete KP hierarchy which is contained in the following proposition.
\begin{proposition}\label{flowontau}
The ghost flow of discrete KP hierarchy on its tau function is as  the following
 \[\label{ghosttau}
 \d_{Zn}\tau(n;t)&=&-S(\phi(n,t),\psi(n;t))\tau(n;t),
\]
\end{proposition}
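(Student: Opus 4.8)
The plan is to read off $\d_{Zn}\tau$ from the ghost flow on the Baker--Akhiezer function, which is already in hand from Proposition~\ref{flowonBA}, by feeding it into the tau-function representation~\eqref{taudefinition1} and matching the outcome with the shift identity~\eqref{deltazH}. The mechanism is that \eqref{taudefinition1} writes $\log\Phi_{BA}(n;t,z)$ as $\hat\De_z\log\tau(n;t)$ plus a piece free of $\tau$, so the flow $\d_{Zn}$ on $\log\tau$ is fixed, up to a term not depending on $t$, by the flow $\d_{Zn}$ on $\log\Phi_{BA}$; and the latter is exactly what Section~4 supplies.

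Carrying this out, I would first differentiate $\Phi_{BA}(n;t,z)=\dfrac{\tau(n;t-[z^{-1}])}{\tau(n;t)}(1+z)^{n}\exp(\sum_i t_iz^i)$ with respect to $Z$. Because the ghost flow commutes with the discrete KP flows $\d_{t_k}$ (Theorem~\ref{symmetry}), it commutes with the Miwa shift $t\mapsto t-[z^{-1}]$, so this produces $\d_{Zn}\Phi_{BA}(n;t,z)=\Phi_{BA}(n;t,z)\,\hat\De_z\big(\d_{Zn}\tau(n;t)/\tau(n;t)\big)$, where $\hat\De_zf(t)=f(t-[z^{-1}])-f(t)$. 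On the other hand, Proposition~\ref{flowonBA} for the single ghost flow $\d_{Zn}=\phi(n)\De^{-1}\psi(n)$ gives $\d_{Zn}\Phi_{BA}(n;t,z)=\phi(n,t)\,S(\psi(n,t),\Phi_{BA}(n;t,z))$, and substituting the explicit evaluation~\eqref{Hpsi} of $S(\psi(n,t),\Phi_{BA}(n;t,z))$ turns this into a bilinear expression in $\phi(n,t)$ and the backward-shifted $\psi(n-1;t-[z^{-1}])$. By~\eqref{deltazH} that bilinear expression is precisely $\Phi_{BA}(n;t,z)\,\hat\De_z S(\phi(n,t),\psi(n;t))$. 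Comparing the two evaluations of $\d_{Zn}\Phi_{BA}/\Phi_{BA}$ then yields $\hat\De_z\big(\d_{Zn}\log\tau(n;t)+S(\phi(n,t),\psi(n;t))\big)=0$ for every $z$.

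Finally I would expand this last identity in powers of $z^{-1}$, using $[z^{-1}]=(z^{-1},\tfrac12 z^{-2},\dots)$: the vanishing of all coefficients forces $\d_{t_k}\big(\d_{Zn}\log\tau+S(\phi,\psi)\big)=0$ for every $k$, hence this quantity depends on $n$ alone and is removed by the intrinsic normalization freedom of $\tau$ and of $S$; what is left is $\d_{Zn}\tau(n;t)=-S(\phi(n,t),\psi(n;t))\tau(n;t)$. The step I expect to take the most care is the middle one: checking that, after inserting~\eqref{Hpsi}, the bilinear term in $\phi$ and the shifted $\psi$ really is $\hat\De_z$ of the $S$-function of~\eqref{deltazH}, since this rests on the Fay-type and $\hat\De_z$ identities of Section~4 and on keeping the discrete argument $n-1$ versus $n$ straight; a secondary subtlety is the last step, namely verifying that the residual $t$-independent term is genuinely absorbed into normalization rather than being an extra contribution. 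A cross-check by running the same argument through $\Psi_{BA}$ and~\eqref{taudefinition2} confirms consistency.
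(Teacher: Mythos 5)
Your proposal follows essentially the same route as the paper: both compute $\d_{Zn}\Phi_{BA}$ from Proposition \ref{flowonBA}, rewrite it via eq.\eqref{Hpsi} and eq.\eqref{deltazH} as $-\hat\De_z\bigl(S(\phi(n,t),\psi(n;t))\bigr)\Phi_{BA}(n,t,z)$, and read off $\d_{Zn}\log\tau=-S(\phi,\psi)$ by matching against the tau-function representation \eqref{taudefinition1}. Your added care about the $t$-independent integration "constant" being absorbed by normalization is a point the paper glosses over, but it does not change the argument.
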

\begin{proof}
The eq.\eqref{dZPhi}, eq.\eqref{Hpsi} and eq.\eqref{deltazH} will lead to the following calculation
\[\notag\d_{Zn}\Phi_{BA}(n;t,z)&=&\phi(n) S(\psi(n),\Phi_{BA}(n;t,z))\\
\notag&=&z^{-1}\phi(n)\psi(n;t-[z^{-1}])\Phi_{BA}(n,t,z)\\
\notag&=&-\hat\De_z(S(\phi(n,t),\psi(n;t)))\Phi_{BA}(n,t,z).
\]
 Therefore we get
 \[\d_{Zn}\log\tau(n;t)&=&-S(\phi(n,t),\psi(n;t)),
\]
which further lead to
eq.\eqref{ghosttau}.
\end{proof}

One can easily check that the following identity holds using methods in \cite{dickeybook,dickeyadditional} which is about KP hierarchy
\[\label{XPHI}X(n,\la,\mu)\Phi_{BA}(n,t,z)&=&\Phi_{BA}(n,t,z)\hat\Delta_z\frac{X(n,\la,\mu)\tau(n,t)}{\tau(n,t)},\]
where $X(n,\la,\mu)$ is defined as \eqref{Xdef} whose action on $\Phi_{BA}(n,t,z)$ is an infinitesimal action on the tau function.

 After defining $Y(n,\la,\mu):=(\la-\mu)\Phi_{BA}(n,t,\mu)\Delta^{-1}\Psi_{BA}(n+1,t,\la)$ and using several above propositions and  lemmas, we will give a proof of  the ASvM formula as the following proposition in  a different way from \cite{LiuS2}.
\begin{proposition}
The action by the vertex operator $X(n,\la,\mu)$ as an infinitesimal transformation on the Baker-Akhiezer function $\Phi_{BA}(n;t,z)$ can be equivalently expressed by the action of infinitesimal operator $Y(n,\la,\mu)$, i.e.
  \[X(n,\la,\mu)\Phi_{BA}(n,t,z)=Y(n,\la,\mu)\Phi_{BA}(n,t,z).\]
\end{proposition}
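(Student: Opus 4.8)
The plan is to recognize $Y(n,\la,\mu)$ as (a scalar multiple of) one of the ghost operators $\phi\De^{-1}\psi$, to evaluate both $Y(n,\la,\mu)\Phi_{BA}$ and $X(n,\la,\mu)\Phi_{BA}$ in closed form as products of Baker--Akhiezer factors, and then to match the two closed forms using the $S$-function identities obtained above.

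For the $Y$ side I would fix $\la,\mu$ and check that $\phi:=\Phi_{BA}(n,t,\mu)$ is an eigenfunction and $\psi:=\Psi_{BA}(n+1,t,\la)$ an adjoint eigenfunction of the discrete KP hierarchy attached to the \emph{same} $B_n$: $\d_{t_j}\Phi_{BA}(n;t,z)=B_j(n)\Phi_{BA}(n;t,z)$ yields $\phi_{t_j}=P_0(B_j\phi)$, while the extra shift $n\mapsto n+1$ in the adjoint slot is precisely what turns $\d_{t_j}\Psi_{BA}(n;t,z)=-B_j^*(n-1)\Psi_{BA}(n;t,z)$ into $\psi_{t_j}=-P_0(B_j^*\psi)$ (the same index convention already built into $Y$ and into Proposition \ref{deltapro}). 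Thus $Y(n,\la,\mu)=(\la-\mu)\,\phi\,\De^{-1}\,\psi$, so Proposition \ref{flowonBA}, i.e. eq.\eqref{dZPhi}, gives
\[
Y(n,\la,\mu)\Phi_{BA}(n,t,z)&=&(\la-\mu)\,\Phi_{BA}(n,t,\mu)\,S(\Psi_{BA}(n+1,t,\la),\Phi_{BA}(n,t,z)),
\]
and then I would use eq.\eqref{mula} (equivalently eq.\eqref{lamu}), with the spectral parameter there replaced by the running variable $z$, to rewrite this $S$-function — say as $-z^{-1}\Phi_{BA}(n,t,z)\Psi_{BA}(n;t-[z^{-1}],\la)$ — exhibiting $Y(n,\la,\mu)\Phi_{BA}(n,t,z)$ as an explicit product of three Baker--Akhiezer factors.

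For the $X$ side I would start from eq.\eqref{XPHI}, namely $X(n,\la,\mu)\Phi_{BA}(n,t,z)=\Phi_{BA}(n,t,z)\,\hat\De_z\!\left(\frac{X(n,\la,\mu)\tau(n,t)}{\tau(n,t)}\right)$, substitute the evaluation $\frac{X(n,\la,\mu)\tau(n,t)}{\tau(n,t)}=(1-\frac{\mu}{\la})\Phi_{BA}(n,t+[\la^{-1}],\mu)\Psi_{BA}(n;t,\la)$ established in the proof of Proposition \ref{deltapro}, and carry out the $\hat\De_z$-differentiation with Lemma \ref{Delataz} applied with $\la$ and $\mu$ interchanged; this turns $\hat\De_z(\Phi_{BA}(n,t+[\la^{-1}],\mu)\Psi_{BA}(n;t,\la))$ into $\frac{\la}{z}\Phi_{BA}(n,t,\mu)\Psi_{BA}(n;t-[z^{-1}],\la)$ and so presents $X(n,\la,\mu)\Phi_{BA}(n,t,z)$ as the same kind of product of three Baker--Akhiezer factors.

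Matching the two closed forms, after cancelling the common factor $(\la-\mu)\Phi_{BA}(n,t,\mu)$, reduces to the assertion that $\la^{-1}\Phi_{BA}(n,t+[\la^{-1}],z)\Psi_{BA}(n;t,\la)$ and $z^{-1}\Phi_{BA}(n,t,z)\Psi_{BA}(n;t-[z^{-1}],\la)$ are two representations of one and the same $S$-function $S(\Psi_{BA}(n+1,t,\la),\Phi_{BA}(n,t,z))$ — precisely eq.\eqref{lamu} versus eq.\eqref{mula} with $\mu\to z$, which in turn rests on Proposition \ref{deltapro}, eqs.\eqref{1}--\eqref{2}. The hard part will not be any one of these steps but the bookkeeping that ties them together: keeping the shifted time arguments $t\pm[\la^{-1}]$ and $t-[z^{-1}]$ and the $n$ versus $n+1$ shifts consistent throughout, tracking the signs that the discrete pseudo-difference calculus produces (the $\La$-shifts of Lemma 2.1, e.g. $(\De^{-1})^*=-\La\circ\De^{-1}$), and making sure eq.\eqref{dZPhi} is legitimately invoked for the parameter-dependent pair $(\phi,\psi)=(\Phi_{BA}(n,t,\mu),\Psi_{BA}(n+1,t,\la))$, i.e. working in the convergence region in which eq.\eqref{mula} — rather than eq.\eqref{lamu} — is the valid expansion.
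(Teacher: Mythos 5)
Your proposal is correct and follows essentially the same route as the paper: both rest on eq.\eqref{XPHI}, the evaluation of $X(n,\la,\mu)\tau/\tau$ as a product of Baker--Akhiezer factors from the proof of Proposition \ref{deltapro}, the $\hat\De_z$-action of Lemma \ref{Delataz} (the paper uses its corollary eq.\eqref{BAdeltaz}), and the identification of the result with the $S$-function via eqs.\eqref{lamu}--\eqref{mula}. The only difference is organizational --- you evaluate $X\Phi_{BA}$ and $Y\Phi_{BA}$ separately to a common closed form and match, while the paper rewrites $X\Phi_{BA}$ step by step into $Y\Phi_{BA}$ --- and your explicit attention to which convergence region (\eqref{lamu} versus \eqref{mula}) legitimizes each antiderivative of $\De$ is, if anything, more careful than the paper's own sign bookkeeping.
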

\begin{proof}
Using eq.\eqref{formula2} and eq.\eqref{XPHI}, the following
equation can be got
\[\notag X(n,\la,\mu)\Phi_{BA}(n,t,z)&=&\Phi_{BA}(n,t,z)(\la-\mu)\hat\Delta_zS(\Phi_{BA}(n,t,\mu),\Psi_{BA}(n+1,t,\la)),\]
which further leads to the following identity by considering eq.\eqref{BAdeltaz}
\[\notag X(n,\la,\mu)\Phi_{BA}(n,t,z)&=&(\la-\mu)\Phi_{BA}(n,t,z)z^{-1}\Phi_{BA}(n,t,\mu)\Psi_{BA}(n,t-[z^{-1}],\la).\]
Then eq.\eqref{mula} can help us  deriving  the following identity
\[\notag X(n,\la,\mu)\Phi_{BA}(n,t,z)
&=&(\la-\mu)\Phi_{BA}(n,t,\mu)\Delta^{-1}\Psi_{BA}(n+1,t,\la)\Phi_{BA}(n,t,z),
\]
which is exactly what we need to prove.
This is the end of the simple proof for the ASvM formula.
\end{proof}

\sectionnew{Conclusions and Discussions}

In this paper, using ghost symmetry of the discrete KP hierarchy acting on Lax operator $L$, ghost flows on the Baker-Akhiezer function $\Phi_{BA}(n;t,z)$, adjoint Baker-Akhiezer function $\Psi_{BA}(n;t,z)$ are given in Proposition \ref{flowonBA}.  By spectral representation in terms of the Baker-Akhiezer function $\Phi_{BA}(n;t,z)$ and adjoint Baker-Akhiezer function $\Psi_{BA}(n;t,z)$ in Proposition \ref{spectral}, we derive ghost flows of the eigenfunction $\phi(n;t)$ and adjoint eigenfunction  $\psi(n;t)$ in Proposition \ref{flowoneigenfunction}. Meanwhile some nice properties of the Baker-Akhiezer function $\Phi_{BA}(n;t,z)$ and adjoint Baker-Akhiezer function $\Psi_{BA}(n;t,z)$ are also got with the help of Fay-identities. By these properties, ghost flows on the tau function are derived nicely  in Proposition \ref{flowontau}. Also we give a new proof of the ASvM formula  with the help of  the $S$ function.

Our next step is to connect these ghost flows with the discrete constrained KP hierarchy. Another interesting problem is to consider ghost flows of sub-hierarchies of the discrete KP hierarchy including the discrete BKP hierarchy and discrete CKP hierarchy. The difficulty is to identify the discrete algebraic structure hidden in the discrete KP hierarchy such that we can find a suitable reduction  to define sub-hierarchies of the discrete KP hierarchy.

{\bf Acknowledgments} {\noindent \small
We are grateful to Prof. Folkert Mueller-Hoissen in Max-Planck-Institute for Dynamics and Self-Organization (G\"ottingen in Germany)
for valuable discussions and suggestions.
Chuanzhong Li is supported by the National Natural Science Foundation of China under Grant No. 11201251,
 the Zhejiang Provincial Natural Science Foundation under Grant No. LY15A010004, LY12A01007, the Natural Science Foundation of Ningbo under Grant No. 2015A610157, 2013A610105, 2014A610029. Maohua Li is supported by  the Zhejiang Provincial Natural Science Foundation under Grant No. LY15A010005. Jingsong He is supported by the National Natural Science Foundation of China under Grant No. 11271210, K.C.Wong Magna Fund in
Ningbo University. Jipeng Cheng and Kelei Tian are supported by the National Natural Science Foundation of China under Grant No. 11301526, 11201451.}

\newpage{}
\vskip20pt


\end{document}